\newtheorem{definition}{Definition}
\newtheorem{theorem}{Theorem}
\newtheorem{lemma}{Lemma}
\newtheorem{remark}{Remark}
\newcommand{\leastcorevaluegeneral}{${\cal ALCV}(\Gamma)$\xspace}
\newcommand{\rleastcorevaluegeneral}{${\cal RLCV}(\Gamma)$\xspace}
\newcommand{\averagecorevaluegeneral}{${\cal LADV}(\Gamma)$\xspace}
\newcommand{\coreempty}{CORE\xspace}
\newcommand{\alcv}{ALCV\xspace}
\newcommand{\rlcv}{RLCV\xspace}
\newcommand{\ladv}{LADV\xspace}
\newcommand{\gamename}{$\Gamma_{f}$\xspace}
\newcommand{\infgame}{$\Gamma_{\rm inf}$\xspace}
\newcommand{\core}{${\cal C}(\Gamma_{f})$\xspace}
\newcommand{\leastcorevalue}{${\cal ALCV}(\Gamma_{f})$\xspace}
\newcommand{\rleastcorevalue}{${\cal RLCV}(\Gamma_{f})$\xspace}
\newcommand{\averagecorevalue}{${\cal LADV}(\Gamma_{f})$\xspace}
\newcommand{\lpprime}{LP-PRIME\xspace}
\newcommand{\lprelax}{LP-RE\xspace}
\newcommand{\lpstrengthen}{LP-STR\xspace}
\newcommand{\lpofrlcv}{LP-RLCV\xspace}
\newcommand{\lprelaxnew}{LP-RELAX-NEW\xspace}
\newcommand{\inff}{{\sigma^{\rm IC}}}
\newcommand{\OnlyInFull}[1]{#1}
\newcommand{\OnlyInShort}[1]{}
\begin{document}
\title{Coreness of Cooperative Games with Truncated Submodular Profit Functions}

\author[1]{Wei Chen}
\author[2]{Xiaohan Shan}
\author[2]{Xiaoming Sun}
\author[2]{Jialin Zhang}

\affil[1]{Microsoft Research}
\affil[2]{Institute of Computing Technology, Chinese Academy of Sciences}

\maketitle

\begin{abstract}
	{\em Coreness} represents solution concepts related to core in cooperative games, which captures the stability of players.
	Motivated by the scale effect in social networks, economics and other scenario, we study the coreness of cooperative game 
with truncated submodular profit functions.
	Specifically, the profit function 
$f(\cdot)$ is defined 
by a truncation of a submodular function $\sigma(\cdot)$: $f(\cdot)=\sigma(\cdot)$ if $\sigma(\cdot)\geq\eta$ and $f(\cdot)=0$ otherwise, where $\eta$ is a given threshold.
In this paper, we study the core and three core-related concepts of truncated submodular profit cooperative game.
We first prove that whether core is empty can be decided in polynomial time and an allocation in core also can be found in polynomial time when core is not empty.
When core is empty, we show hardness results and approximation algorithms for computing other core-related concepts including 
{\em relative} least-core value, {\em absolute} least-core value and least {\em  average dissatisfaction} value.
\end{abstract}

\section{Introduction}
\OnlyInFull{
With the wide popularity of social media and social network sites such as Facebook, Twitter, WeChat, etc.,
social networks have become a powerful platform for spreading information among individuals.
Thus, influential users always play important role in a social network.
Motivated by this background, influence diffusion in social networks has been extensively studied
 \cite{domingos2001mining,kempe2003maximizing,chen2013information}.}
\OnlyInShort{Influence diffusion in social networks has been extensively studied
	\cite{domingos2001mining,kempe2003maximizing,chen2013information}.}
Most of previous works focus on exploring influential nodes.
To the best of our knowledge, there is no study about the ``stability'' of influential nodes (seed set) when they are treated as a coalition.

Consider the following scenario.
A group of influential people in a social network are considering forming
	a coalition so that they can better serve many advertisers through
	viral marketing in the social network.
To make the coalition stable, we need to design a fair profit allocation
	scheme among the members of the coalition, such that
	no individual or a subset of people have incentive to deviate
	from this coalition, thinking that the allocation to them is unfair and
	they could earn more by the deviation and forming an alliance by
	themselves.
A useful and mature framework of studying such incentives for stable coalition
	formation is the cooperative game theory, and in particular
	the coreness (core and its related concepts) of the cooperative games \cite{Gabrielle2004groupstability,reshef2011ijcai}.

	First we will motivate our consideration of the truncated submodular
	functions here.
In the above social influence scenario, the typical way of measuring the
	contribution of any set $S$ of influential people is by its
	influence spread function $\sigma(S)$, which measures the expected number
	of people in the social network that could be influenced by
	$S$ under some stochastic diffusion model.
Extensive researches have been done on stochastic diffusion models, and
	it has been shown that under a large class of models $\sigma(S)$
	is both monotone and submodular\footnote{A set function
	$f$ is monotone if $f(S) \le f(T)$ for all $S\subseteq T$, and is
	submodular if $f(S\cup \{u\}) - f(S) \ge f(T\cup \{u\}) - f(T)$ for
	all $S\subseteq T$ and $u \not\in T$.
	}
	\cite{kempe2003maximizing,mossel10,chen2013information}.
However, the advertisers would only be interested in the coalition as
	a viral marketing platform when the influence spread reaches certain
	scale level.
In other words, the coalition can only receive profit after the influence
	spread is above a certain scale threshold $\eta$.
Therefore, the true profit function for the coalition is
	$f(S) = \sigma(S)$ when $\sigma(S) \ge \eta$, and $f(S) = 0$ otherwise.
We call such $f$ truncated submodular functions.

Both submodularity and scale effect are common in economic behaviors beyond
	the above example of viral marketing in social networks.
Therefore, considering truncated submodular functions
	as the profit functions is reasonable.
In this paper, we study the computational issues related to the coreness
	of cooperative games with truncated submodular profit functions.

\textbf{Solution Concepts in Cooperative Games.}
A cooperative game $\Gamma=(V, \gamma)$ consists of a player set $V=\{1,2,\cdots,n\}$
and a profit function $\gamma: 2^V\rightarrow \mathbb{R}$
with $\gamma(\emptyset)=0$.
A subset of players $S\subseteq V$ is called a \emph{coalition} and $V$ is called the \emph{grand coalition}.
For each coalition $S$, $\gamma(S)$ represents the profit obtained by $S$ without help of other players.
An allocation over the players is denoted by a vector $x=(x_1,x_2,\cdots,x_n)\in \mathbb{R}^{|V|}$ whose components are one-to-one associated with players in $V$, where $x_i\in \mathbb{R}$ is the value received by player $i\in V$ under allocation $x$.
For any player set $S\subseteq V$, we use the shorthand notation $x(S)=\sum_{i\in S}x_i$. 
A set of all allocations satisfying some specific requirements is called a \emph{solution concept}.

The \emph{core} \cite{gillies1953core,shapley1955core} is one of the earliest and most attractive solution concepts
	that directly addresses the issue of stability.
The core of a game is the set of allocations ensuring that no coalition would have an incentive
    to split from the grand coalition, and do better on its own.
More precisely, the core of a game $\Gamma$ (denoted by ${\cal C}(\Gamma)$), is the following set of allocations:
${\cal C}(\Gamma)$=$\{x\in \mathbb{R}^{|V|}: x(V)=\gamma(V), x(S)\geq\gamma(S),~ \forall~S\subseteq V\}$.
In practice, core is very strict and may be even empty in some cases.
When ${\cal C}(\Gamma)$ is empty, there must be some coalitions becoming dissatisfaction since they can obtain more benefits
    if they leave the grand coalition and work as a separated team.
In this case, we use the dissatisfaction degree
	(or dissatisfaction value), defined as $dv(S,x) = \max\{\gamma(S)-x(S),0\}$,
	to capture the instability of player set $S$ with respect to
	the allocation $x$.
Then, the overall stability of the game can be measured as either the
	worst-case or average-case dissatisfaction degree, for
	which we consider the following three versions.

The first one is the \textit{relative least-core value} (${\cal RLCV}$) \cite{faigle1993relativeleastcore}, which reflects the relative stability,
i.e. the minimum value of the maximum proportional difference between the profits and the payoffs among all coalitions.
\begin{definition}
	Given a cooperative game $\Gamma$, 
	the {\em relative least-core value} of $\Gamma$ (\rleastcorevaluegeneral) is $\min_x\max_S \frac{dv(S,x)}{\gamma(S)}$. Technically, \rleastcorevaluegeneral is the optimal solution of the following linear programming:
	\begin{equation}
	\OnlyInShort{\small}
	\begin{array}{ll}
    \min & \ r\\
	\mbox{s.t.}&\left\{
	\begin{array}{ll}
	x(V)=\gamma(V)\\
	x(S)\geq (1-r)\gamma(S)  & \quad \forall~S\subseteq V\\
	x(\{i\}) \ge 0 & \quad \forall~i\in V
	\end{array}\right.
	\end{array}
	\end{equation}
\end{definition}

The second one is the \textit{absolute least-core value} (${\cal ALCV})$ \cite{Maschler1979leastcore} which reflects the absolute stability,
    i.e. the minimum value of the maximum difference between the profits and the payoffs among all coalitions.
The formal definition is as following.
\begin{definition}
Given an cooperative game $\Gamma$,
the {\em absolute least-core value} of $\Gamma$ (\leastcorevaluegeneral) is $\min_x\max_Sdv(S,x)$. Technically, \leastcorevaluegeneral is the optimal solution of the following linear programming:
 \begin{equation}
 \OnlyInShort{\small}
 \begin{array}{ll}
\min & \ \varepsilon\\
 \mbox{s.t.}&\left\{
 \begin{array}{ll}
 x(V)=\gamma(V)\\
 x(S)\geq\gamma(S)-\varepsilon  & \quad \forall~S\subseteq V\\
 	x(\{i\}) \ge 0 & \quad \forall~i\in V
 \end{array}\right.
 \end{array}
 \end{equation}
\end{definition}

The above two classical least-core values capture the stability from the perspective of the most dissatisfied coalition i.e. the worst case of stability.
Sometimes the worst case is too extreme to reflect the real stability.
Thus, we introduce the \textit{least average dissatisfaction value} (${\cal LADV}$) which reflects the minimum value of average dissatisfaction degree among all coalitions.

\begin{definition}\label{def:acv}
	Given a cooperative game $\Gamma$, the
	{\em least average dissatisfaction value} of $\Gamma$ (\averagecorevaluegeneral) is
$\min_x \mathbb{E}_S(dv(S,x))$.
 Technically, \averagecorevaluegeneral is the optimal value of
	the following linear programming:
	\begin{equation}
	\OnlyInShort{\small}
	\begin{array}{ll}
	\min & \frac{1}{2^n}\sum_{S\subseteq V}\max \{\gamma(S)-x(S), 0\}\\
	\mbox{s.t.}&\left\{
	\begin{array}{ll}
	x(V)=\gamma(V)\\
	x(\{i\}) \ge 0 & \quad \forall~i\in V\\
	\end{array}\right.
	\end{array}
	\end{equation}
\end{definition}

In this paper, we consider the following computational problems in the context
	of truncated submodular functions:
(a) Whether the core of a given cooperative game is empty?
(b) How to find an allocation in core if the core is not empty?
(c) If the core is empty, how
	to compute the relative least-core value, the absolute least-core value and the least average dissatisfaction value of a cooperative game?

\textbf{Contributions.}
We study coreness (solution concepts related to core) of truncated submodular profit cooperative game \gamename.
We consider computational properties of the core, the relative least-core value, the absolute least-core value and the least average dissatisfaction value of \gamename, which are denoted by \core, \rleastcorevalue, \leastcorevalue and \averagecorevalue, respectively.

We first prove that checking the non-emptiness of \core can be done in polynomial time.
Moreover, we can find an allocation in the core if the core is not empty.
Next, we consider the case when the core is empty.
For the problem of computing the relative least-core value (\rleastcorevalue),
	we show that it is in general NP-hard, but
	when truncation threshold $\eta=0$, there is a polynomial time
	algorithm.
Along the way, we also find an interesting partial result showing that
	there is no polynomial time separation oracle for the \rleastcorevalue's
	linear program unless P=NP, which is of independent interest since
	it reveals close connections with a new class of combinatorial problems.
%
%
For the absolute least-core value problem \leastcorevalue, we prove that
	finding \leastcorevalue is APX-hard even when $\sigma(\cdot)$ is defined as the influence spread under the classical independent cascade
	(IC) model in social network.
We also prove that there exists a polynomial time algorithm which can guarantee an additive term approximation.
Finally, for the least average dissatisfaction value problem \averagecorevalue,
	we show that we can use the stochastic gradient descent algorithm to compute \averagecorevalue to an arbitrary small additive error.
	
\textbf{Related Work.}
Cooperative game theory is a branch of (micro-)economics that studies the behavior of self-interested agents in strategic settings where binding agreements between agents are possible \cite{chalkiadakis2011computational}.
Numerous classical studies about cooperative game provide rich mathematical framework to solve issues related to cooperation in multi-agent systems \cite{deng1994complexity,ieong2005marginal,conitzer2006complexity}.
\cite{schulz2013approximating} studies the approximation of the absolute least core value of supermodular cost cooperative games, the results in this paper can be generalized to submodular profit cooperative games.
An important application of our study is to analyze the stability of influential people in social networks.
Almost all the existing studies focus on selecting seed set \cite{ChenWY09efficientinfluence,goyal2012minimizing,tang2015rrset}.
To the best of our knowledge, there is no literature considering the stability of the selected seed set.
We utilize cooperative game theory to analyse the stability of seed set,
	and generalize it to a generic cooperative game with truncated submodular functions. 
	The truncated operation represents the  ``threshold effect'' which has been studied widely in literature\cite{Mark1978threshold,Albert1984}. 
	
\section{Model and Problems}
\subsection{Cooperative Games with Truncated Submodular Profit Functions}
A truncated submodular profit cooperative game is denoted by \gamename$=(V, f(\cdot))$.
\OnlyInShort{In \gamename, $V$ is the player set and $f(\cdot)$ is the profit function which is defined as $f(S)=\sigma(S)$ if $\sigma(S)\geq \eta$ and $f(S)=0$ if $\sigma(S)< \eta$.}
In \gamename, $V$ is the player set and $f(\cdot)$ is the profit function which is defined as follows:
\begin{equation*}
\OnlyInShort{\small}
f(S)=
\begin{cases}
\sigma(S), & \text{if $\sigma(S)\geq \eta$}\\
0, & \text{if $\sigma(S)< \eta$}
\end{cases}
\end{equation*}
Note that $\sigma(\cdot)$ is a nonnegative monotone increasing submodular function with $\sigma(\emptyset)=0$ and $0\leq\eta\leq\sigma(V)$ is a nonnegative threshold.
To express clearly, in the left of this paper, a truncated submodular profit cooperative game is denoted by a triple form $(V, \sigma(\cdot), \eta)$.

Note that the explicit representation of $\sigma(\cdot)$ might be exponential in the size of $V$.
The standard way to bypass this difficulty is to assume that
 $\sigma(\cdot)$ is given by a value oracle.

\subsection{Computational Problems on the Coreness}\label{sec:problemdefinition}
Given an truncated submodular profit cooperative game \gamename, we focus on the following problems:\\
\coreempty: Is \core $\neq\emptyset$ and how to find an allocation in \core when \core$\neq\emptyset$?\\
\alcv: When \core$=\emptyset$, how to compute \leastcorevalue?\\
\rlcv: When \core$=\emptyset$, how to compute \rleastcorevalue?\\
\ladv: When \core$=\emptyset$, how to compute \averagecorevalue?

Before we analyze the above problems, we introduce a specific instance of truncated submodular profit cooperative game (see Section \ref{sec:infgame}).

\subsection{Influence Cooperative Game (\infgame)}\label{sec:infgame}
As the description in our introduction, an important motivation of our model is influence in social networks.
In this section, we introduce a specific instance of truncated submodular profit cooperative game, {\em influence cooperative game}.

{\bf Social graph.} A social graph is a directed graph $G=(V\cup U,E; P)$, where $V\cup U$ is the vertex set and $E$ is the edge set.
$P=\{p_e\}_{e\in E}$ and $p_e$ is the influence probability on each edge $e\in E$.
Note that, $V$ and $U$ denote the vertex set of influential people and target people in $G$, respectively.

{\bf Influence diffusion model.} The information diffusion process follows the independent cascade (IC) model proposed by \cite{kempe2003maximizing}.
\OnlyInFull{
In the IC model, discrete time steps $t=0, 1, 2, \cdots$ are used to model the diffusion process. Each node in $G$ has two states:
inactive or active.
 At step 0,
nodes in seed set $S$ are active and other nodes are inactive.
For any step $t\ge 1$, if a node $u$ is newly active at step $t-1$,
 $u$ has a single chance to influence each of its inactive out-neighbor $v$ with independent probability $p_{uv}$
to make $v$ active.
Once a node becomes active, it will never return to the inactive state.
The diffusion process stops when there is no new active nodes at a time step.
}
For any $S\subseteq V$, we use $\inff(S)$ to denote the influence spread of $S$,
	the expected number of activated nodes in $U$ from seed set $S\subseteq V$, at the end of an IC diffusion.
According to \cite{kempe2003maximizing}, $\inff(\cdot)$ is a monotone submodular function.

\begin{definition}
An {\em influence cooperative game} \infgame$=(V, \inff(\cdot), \eta)$ is a special form of the truncated cooperative
	game, with $V$ as the player set,
	and the truncation of influence spread function $\inff(\cdot)$ as the profit function.
\end{definition}

In the rest of this paper, we analyze problems defined in Section \ref{sec:problemdefinition} one by one.
Note that our positive results (properties and algorithms) could apply to all truncated submodular profit cooperative games including influence cooperative game.
Our hardness results are established for the influence
	cooperative games, so it is stronger than the hardness results
	for general truncated submodular cooperative games.

\section{Computing Core}\label{sec:core}
We start by considering the core of \gamename (\core).
In \gamename, we say a player $i\in V$ is a \emph{veto player}
    if $\sigma(S)<\eta$ for any $S\subseteq V\setminus\{i\}$.
That is to say, a successful coalition must include all veto players.
\begin{lemma}\label{lem:coreiff}
\core $\neq \emptyset$ if and only if:\\
   (i) There exists at least one veto player in \gamename, or\\
   (ii) $\sigma(S)=\sum_{i\in S}\sigma(\{i\})$, for any $S\subseteq V$.\\
\end{lemma}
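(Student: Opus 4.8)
The plan is to prove both implications, with the necessity (``only if'') direction carrying the real work. Throughout I write $\Delta_i = \sigma(V)-\sigma(V\setminus\{i\})$ for the marginal contribution of player $i$ to the grand coalition, and I use that $f(V)=\sigma(V)$ since $\eta\le\sigma(V)$, so every core allocation satisfies $x(V)=\sigma(V)$.

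For the sufficiency direction I would simply exhibit a core allocation in each case. If a veto player $i$ exists, then every coalition $S$ with $i\notin S$ has $\sigma(S)<\eta$ and hence $f(S)=0$; the allocation $x_i=\sigma(V)$ with $x_j=0$ for $j\ne i$ then gives $x(S)=0\ge f(S)$ when $i\notin S$, and $x(S)=\sigma(V)\ge\sigma(S)\ge f(S)$ by monotonicity when $i\in S$, so it lies in \core. If instead $\sigma$ is additive, the allocation $x_i=\sigma(\{i\})$ satisfies $x(S)=\sigma(S)\ge f(S)$ for every $S$ and $x(V)=\sigma(V)$, so it too lies in \core. These are routine verifications.

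For the necessity direction I would assume \core$\,\ne\emptyset$, fix a core allocation $x$, suppose there is \emph{no} veto player, and aim to establish (ii). The absence of a veto player means that for each $i$ some $S\subseteq V\setminus\{i\}$ has $\sigma(S)\ge\eta$, so by monotonicity $\sigma(V\setminus\{i\})\ge\eta$ and thus $f(V\setminus\{i\})=\sigma(V\setminus\{i\})$. The core inequality on $V\setminus\{i\}$ then reads $x(V)-x_i\ge\sigma(V\setminus\{i\})$, i.e.\ $x_i\le\Delta_i$. Summing over $i$ and using $\sum_i x_i=x(V)=\sigma(V)$ yields the lower bound $\sum_i\Delta_i\ge\sigma(V)$. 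The heart of the argument is to pit this against a matching upper bound coming purely from submodularity: fixing any ordering $\pi(1),\dots,\pi(n)$ of $V$, setting $S_k=\{\pi(1),\dots,\pi(k)\}$, and telescoping $\sigma(V)=\sum_k(\sigma(S_k)-\sigma(S_{k-1}))$, each increment dominates $\Delta_{\pi(k)}$ by diminishing returns (since $S_{k-1}\subseteq V\setminus\{\pi(k)\}$), giving $\sigma(V)\ge\sum_i\Delta_i$. Combining the two bounds forces $\sum_i\Delta_i=\sigma(V)$, hence equality in every telescoping step for every ordering. Placing a chosen player $i$ first makes its initial increment $\sigma(\{i\})$ coincide with $\Delta_i$; ranging over all $i$ gives $\Delta_i=\sigma(\{i\})$ for every $i$, whence $\sum_i\sigma(\{i\})=\sum_i\Delta_i=\sigma(V)$. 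Finally I would promote this single equality to full additivity: for any $S$, subadditivity gives $\sigma(S)\le\sum_{i\in S}\sigma(\{i\})$ and likewise for $V\setminus S$, while submodularity gives $\sigma(S)+\sigma(V\setminus S)\ge\sigma(V)=\sum_i\sigma(\{i\})$; reconciling these forces $\sigma(S)=\sum_{i\in S}\sigma(\{i\})$, which is exactly (ii).

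The step I expect to be the main obstacle is the submodular inequality $\sum_i\Delta_i\le\sigma(V)$ together with the more delicate extraction of the termwise identities $\Delta_i=\sigma(\{i\})$ from the aggregate equality $\sum_i\Delta_i=\sigma(V)$; once those are in hand, promoting $\sum_i\sigma(\{i\})=\sigma(V)$ to full additivity and handling the sufficiency direction are just bookkeeping with monotonicity, subadditivity, and the core constraints.
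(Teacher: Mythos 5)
Your proof is correct, and its backbone coincides with the paper's: both arguments hinge on summing the $n$ core constraints $x(V)-x_i \ge f(V\setminus\{i\}) = \sigma(V\setminus\{i\})$ (valid because ``no veto player'' plus monotonicity makes every $V\setminus\{i\}$ successful) and playing the result against a bound coming from submodularity; your inequality $\sum_i \Delta_i \le \sigma(V)$ is literally the paper's $\sum_i \sigma(V\setminus\{i\}) \ge (n-1)\sigma(V)$ rewritten. Where you genuinely diverge is in the bookkeeping and the endgame. The paper fixes the single ordering $1,\dots,n$, expands $\sigma(V)$ and each $\sigma(V\setminus\{i\})$ into increments $M_j$ and $M'_{ij}$, concludes $M'_{ij}=M_j$ from the equality case, and then asserts additivity $\sigma(S)=\sum_{i\in S}\sigma(\{i\})$ in one unexplained line --- a jump that needs justification, since termwise equality of increments along one fixed ordering does not obviously yield additivity on all of $2^V$. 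You instead exploit that your telescoping bound holds for \emph{every} ordering: placing each player $i$ first forces $\sigma(\{i\})=\Delta_i$, hence $\sum_i\sigma(\{i\})=\sigma(V)$, and then the subadditivity/submodularity sandwich ($\sigma(S)\le\sum_{i\in S}\sigma(\{i\})$ and $\sigma(V\setminus S)\le\sum_{i\notin S}\sigma(\{i\})$ versus $\sigma(S)+\sigma(V\setminus S)\ge\sigma(V)$) forces equality everywhere. This finishing argument is self-contained and supplies exactly the step the paper leaves implicit, at the cost of a slightly longer write-up; the sufficiency direction is the same in both, exhibiting the same two allocations.
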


\begin{proof}
Suppose the player set of \gamename is $V=\{1,2,\cdots, n\}$. We first prove the sufficiency of Lemma \ref{lem:coreiff}.
On one hand, suppose $i$ is a veto player of \gamename, then we can find a trivial allocation $x$ in \core:
    $x(\{i\})=\sigma(V)$ and $x(\{j\})=0$, $\forall~j\in V\setminus \{i\}$.
On the other hand, $x(\{i\})=\sigma(\{i\})$ ( $\forall i \in V$) is an allocation in \core if $\sigma(S)=\sum_{i\in S}\sigma(\{i\})$.

Now we prove the necessity.
Suppose \core$\neq\emptyset$ and $x\in$\core.
Let $\sigma(V)=\sum_{i=1}^n M_i$, where $M_i=\sigma(\{1,2,\cdots,i\})-\sigma(\{1,2,\cdots,i-1\})$
    is the marginal increasing of player $i$.
If there is no veto player, then for any $i\in V$, $\sigma(V\setminus\{i\})\geq\eta$
    since $\sigma(S)$ is monotone.
Thus, $f(V\setminus\{i\})=\sigma(V\setminus\{i\})$, $\forall~i\in V$.
Suppose $\sigma(V\setminus\{i\})=\sum_{j=1}^{i-1}M_j+\sum_{j=i+1}^{n}M'_{ij}$,
    where $M'_{ij}=\sigma(\{1,2,\cdots,i-1,i+1,\cdots,j\})-\sigma(\{1,2,\cdots,i-1,i+1,\cdots,j-1\})$.
Note that $M'_{ij}\geq M_j$ since $\sigma(S)$ is submodular.
By the definition of the core, for any $i\in\{1,2,\cdots,n\}$, we have:
$x(V\setminus\{i\})\geq f(V\setminus\{i\})=\sigma(\{V\setminus\{i\}\})$.
That is,$x(V)-x(\{i\})\geq \sum_{j=1}^{i-1}M_j+\sum_{j=i+1}^{n}M'_{ij}$, $\forall i \in V$.

Summing up these inequalities for all $i\in V$, we have,
$(n-1)\sum_{i=1}^n x(\{i\})\geq\sum_{i=1 }^n (\sum_{j=1}^{i-1}M_j+\sum_{j=i+1}^{n}M'_{ij})
	\geq \sum_{i=1 }^n(\sum_{j=1}^{i-1}M_j+\sum_{j=i+1}^{n}M_j)
	=\sum_{i=1 }^n(\sigma(V)-M_i)
	=(n-1)\sigma(V)$.


We have known that $\sum_{i=1}^n x(\{i\})=\sum_{j=1}^{n}M_j=\sigma(V)$ and then
    $M_j=M'_{ij}$, $\forall i,j\in V$.
Thus, $\sigma(S)=\sum_{i\in S}\sigma(\{i\})$.
\end{proof}
An important application of Lemma \ref{lem:coreiff} is Theorem \ref{thm:core_polysolvable}.
\begin{theorem}\label{thm:core_polysolvable}
Deciding whether \core is empty can be done in polynomial time and
 an allocation in \core can be computed in polynomial time if \core is not empty.
\end{theorem}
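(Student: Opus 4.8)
The plan is to invoke Lemma \ref{lem:coreiff}, which reduces non-emptiness of \core to two conditions: (i) the existence of a veto player, and (ii) the additivity $\sigma(S)=\sum_{i\in S}\sigma(\{i\})$ for all $S$. The difficulty is that both conditions, read literally, quantify over exponentially many subsets. The heart of the argument is to show that each condition can be decided with only $O(n)$ value-oracle queries by exploiting monotonicity and submodularity, and to exhibit an explicit core allocation in each of the non-empty cases.

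First I would handle condition (i). Since $\sigma$ is monotone increasing, for a fixed player $i$ the maximum of $\sigma(S)$ over all $S\subseteq V\setminus\{i\}$ is attained at $S=V\setminus\{i\}$. Hence $i$ is a veto player if and only if $\sigma(V\setminus\{i\})<\eta$, so testing for any veto player costs only the $n$ queries $\sigma(V\setminus\{i\})$. If some $i$ passes, I would output $x_i=\sigma(V)$ and $x_j=0$ for $j\neq i$, and verify membership in \core: the grand-coalition constraint holds since $x(V)=\sigma(V)=f(V)$ (as $\eta\le\sigma(V)$); any coalition avoiding $i$ lies in $V\setminus\{i\}$ and thus has $f(S)=0\le x(S)$; and any coalition containing $i$ satisfies $x(S)\ge\sigma(V)\ge\sigma(S)\ge f(S)$ by monotonicity.

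Next I would handle condition (ii), and this is where the main obstacle lies. The key fact I need is that for a monotone submodular $\sigma$ with $\sigma(\emptyset)=0$, additivity is equivalent to the single identity $\sigma(V)=\sum_{i\in V}\sigma(\{i\})$. One direction is immediate. For the converse, submodularity yields subadditivity $\sigma(S)\le\sum_{i\in S}\sigma(\{i\})$ for every $S$, while the inequality $\sigma(S)+\sigma(V\setminus S)\ge\sigma(V)+\sigma(\emptyset)=\sigma(V)$ (the $A,B\mapsto A\cup B,A\cap B$ form of submodularity with $A=S$, $B=V\setminus S$), combined with the hypothesis $\sigma(V)=\sum_i\sigma(\{i\})$, sandwiches the two subadditive bounds and forces equality $\sigma(S)=\sum_{i\in S}\sigma(\{i\})$ for every $S$. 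Checking the identity needs only the $n+1$ queries $\sigma(V)$ and $\sigma(\{i\})$; if it holds I would output $x_i=\sigma(\{i\})$, which lies in \core since $x(V)=\sigma(V)=f(V)$ and $x(S)=\sigma(S)\ge f(S)$ for all $S$.

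Combining the two tests gives the full algorithm: test for a veto player via the $n$ queries $\sigma(V\setminus\{i\})$; if none exists, test the additivity identity; and declare \core empty precisely when both tests fail, which is correct by Lemma \ref{lem:coreiff}. The total cost is $O(n)$ oracle calls plus polynomial arithmetic, and an explicit allocation is produced whenever the core is non-empty. The crux of the proof is the reduction of the all-subsets quantifier in condition (ii) to a single equality, established through the subadditive–supermodular sandwich above.
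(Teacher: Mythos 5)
Your proposal is correct and follows essentially the same route as the paper: the paper's proof is exactly this algorithm (query $\sigma(V\setminus\{i\})$ for all $i$ to detect a veto player, otherwise test the single equality $\sigma(V)=\sum_{i\in V}\sigma(\{i\})$, and return the corresponding allocation $x_j=\sigma(V)$ or $x_i=\sigma(\{i\})$). The only difference is that you explicitly justify why the single equality suffices for condition (ii) via the subadditivity/supermodularity sandwich --- a step the paper's proof leaves implicit --- which strengthens rather than changes the argument.
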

\begin{proof}[Sketch]
	First, it takes polynomial time to check the non-emptiness of \core. When \core is not empty, then $(x_j=\sigma(V), \textbf{0}_{\{i:i\neq j\}})$ $\in$ \core when $j$ is a veto player and $(\sigma(\{1\}), \cdots, \sigma(\{n\}))$ $\in$ \core when $(ii)$ satisfies.
\end{proof}
The detail proof of Theorem \ref{thm:core_polysolvable} is shown in the appendix.

\section{Computing Relative Least-Core Value}
\label{sec:relativeleastcorevalue}
From Lemma \ref{lem:coreiff}, \core may be empty in many cases.
It is obvious that \rleastcorevalue$>0$ if \core$=\emptyset$ and \rleastcorevalue$=0$ otherwise.
In this section, we study computational properties of \rlcv  problem.
The linear programming corresponding to \rleastcorevalue (\lpofrlcv) is as follows:
\begin{equation}
\OnlyInShort{\small}
\begin{array}{ll}
\min& r\\
\mbox{s.t.}&\left\{
\begin{array}{ll}
x(V)=\sigma(V)\\
x(S)\ge (1-r)\sigma(S)  & \forall~S\subseteq V,~ \sigma(S)\geq \eta\\
x(\{i\}) \ge 0 & \forall~i\in V\\
\end{array}\right.
\end{array}
\end{equation}

A special case of computing \rleastcorevalue is when $\eta=0$.
It captures the scenario that the profit of any coalition exactly equals to its influence spread under influence cooperative game.
In Theorem~\ref{thm:existseparationoracle} we show that, although there are exponential number of constraints, \lpofrlcv can be solved in polynomial time by
providing a polynomial time separation oracle when $\eta=0$. 
A separation oracle for a linear program is an algorithm that, given a putative feasible solution, checks whether it is indeed feasible, and if not, outputs a violated constraint.
It is known that a linear program can be solved in polynomial time by the ellipsoid method as long as it has a polynomial time separation oracle \cite{grotschel2012geometric}.
\begin{theorem}\label{thm:existseparationoracle}
	There exists a polynomial time separation oracle of \lpofrlcv when $\eta=0$.
Therefore, \rlcv can be solved in polynomial time when $\eta=0$.
\end{theorem}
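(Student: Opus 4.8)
The plan is to exhibit an explicit separation oracle for \lpofrlcv and then invoke the ellipsoid method \cite{grotschel2012geometric}. The constraints $x(V)=\sigma(V)$ and $x(\{i\})\ge 0$ number only $O(n)$ and are checked by direct evaluation, so the whole difficulty is the exponential family $x(S)\ge(1-r)\sigma(S)$, one inequality per $S\subseteq V$. Given a candidate point $(\bar x,\bar r)$, deciding whether this family holds is equivalent to testing whether $\min_{S}[\,\bar x(S)-(1-\bar r)\sigma(S)\,]\ge 0$. The obvious route --- minimizing this set function directly --- is a dead end: since $\bar x(\cdot)$ is modular and $(1-\bar r)\sigma(\cdot)$ is submodular (when $\bar r\le 1$), the objective is supermodular, and minimizing a supermodular function is NP-hard in general. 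So the first thing I would record is that brute-force submodular optimization cannot be what makes the case $\eta=0$ tractable.

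The key observation I would build the proof around is that, when $\eta=0$, the entire exponential family is implied by its $n$ singleton members. Concretely, I would first note that a nonnegative submodular function with $\sigma(\emptyset)=0$ is subadditive: submodularity gives $\sigma(A)+\sigma(B)\ge\sigma(A\cup B)+\sigma(A\cap B)\ge\sigma(A\cup B)$ because $\sigma(A\cap B)\ge 0$, and iterating yields $\sigma(S)\le\sum_{i\in S}\sigma(\{i\})$ for every $S$. Hence, for $\bar r\le 1$ (so $1-\bar r\ge 0$), if the singleton constraints $\bar x_i\ge(1-\bar r)\sigma(\{i\})$ all hold, then for any $S$,
\[
\bar x(S)=\sum_{i\in S}\bar x_i\;\ge\;(1-\bar r)\sum_{i\in S}\sigma(\{i\})\;\ge\;(1-\bar r)\sigma(S),
\]
so every subset constraint is automatically satisfied. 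The case $\bar r>1$ is even easier: then $(1-\bar r)\sigma(S)\le 0\le\bar x(S)$, and all subset constraints hold trivially.

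This immediately yields the oracle. Given $(\bar x,\bar r)$: check $\bar x(V)=\sigma(V)$ and $\bar x_i\ge 0$ directly; then, using $n$ value-oracle calls, check each singleton inequality $\bar x_i\ge(1-\bar r)\sigma(\{i\})$. If some singleton $\{i\}$ is violated, return it as a violated constraint of \lpofrlcv (singletons are genuine members of the family when $\eta=0$, since $\sigma(\{i\})\ge 0=\eta$, so this is a legitimate certificate); otherwise the subadditivity argument guarantees that no subset constraint is violated, and we report feasibility. Every step is polynomial, so \lpofrlcv admits a polynomial separation oracle and is solvable in polynomial time. I would close by remarking that this shortcut is exactly what breaks when $\eta>0$: the family then ranges only over sets with $\sigma(S)\ge\eta$, which need not contain the singletons, the subadditive domination collapses, and one is left facing the genuinely hard supermodular minimization --- consistent with the paper's companion hardness result. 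The main (and only real) obstacle is spotting the subadditive reduction; once it is in hand, the oracle and its correctness are routine.
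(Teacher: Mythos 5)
Your proposal is correct and follows essentially the same route as the paper: both reduce the exponential constraint family to the $n$ singleton constraints by exploiting subadditivity of $\sigma$ (the paper phrases this as the ratio $g(S)=1-x'(S)/\sigma(S)$ being maximized at a singleton, via subadditivity plus the mediant inequality $\min_i a_i/b_i \le \sum_i a_i/\sum_i b_i$, while you sum the singleton constraints directly). Your write-up is in fact slightly more careful at the edges, explicitly covering $\bar r>1$ and avoiding the division by $\sigma(S)$ that the paper's ratio formulation leaves implicit when $\sigma(S)=0$.
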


\begin{proof}
	Given any solution candidate of \lpofrlcv $(x',r')$, we need to either assert $(x',r')$ is a feasible solution or
	find a constraint in \lpofrlcv such that $(x',r')$ violates it.
	Note that, checking $x'(V)=\sigma(V)$ and $x'(\{i\})\ge 0$ ($\forall~ i\in V$) can be done
	in polynomial time.
	Thus, we only need to check whether $g(S)\triangleq 1-x'(S)/\sigma(S)\leq r',~ \forall S\subseteq V$.
	
	An important property is 
	 $g(S)$ achieves its maximum value when $S$ contains only one single player.
This is because
 $g(S)=1-\frac{x'(S)}{\sigma(S)}\leq 1-\frac{\sum_{i\in S}x'_i}{\sum_{i\in S}\sigma(\{i\})}\leq 1-\min_{i:i\in S}\{\frac{x'_i}{\sigma(\{i\})}\}=\max_{i:i\in S}\{g(\{i\})\}$.
	The first inequality is due to the submodularity of $\sigma(S)$ and the second inequality is due to
	$\min_{i:i\in[n]}\{\frac{a_i}{b_i}\}\leq \frac{\sum_{i=1}^n a_i}{\sum_{i=1}^n b_i}$, $\forall a_i, b_i \in \mathbb{R}$.
	Thus, the exponential number of constraints can be simplified to $n$ constraints on all single players.
	Then, we can find a polynomial time separation oracle of \lpofrlcv directly.
\end{proof}

When $\eta=0$, \rlcv can be solved in polynomial time is mainly because the most dissatisfaction coalition is a single player.
However, when $\eta\neq 0$, it becomes intractable to find the most dissatisfaction coalition.

\begin{theorem}\label{thm:oraclehard}
	There is no polynomial time separation oracle of \lpofrlcv for some $\eta>0$, unless P=NP.
\end{theorem}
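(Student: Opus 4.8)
The plan is to exhibit a family of query points on which any correct separation oracle must solve an NP-hard combinatorial problem. First I would reformulate what the oracle must do. Given a candidate $(x',r')$, checking $x'(V)=\sigma(V)$ and $x'(\{i\})\ge 0$ is trivial, so the oracle's real task is to decide whether the point violates one of the exponentially many inequalities, i.e. whether
\[
\exists\, S\subseteq V:\ \sigma(S)\ge\eta \ \text{ and }\ x'(S)<(1-r')\sigma(S).
\]
Thus a polynomial-time oracle would, in polynomial time, decide the existence of such a set $S$. I would prove this decision problem is NP-hard by reducing SET COVER (equivalently VERTEX COVER) to it, and I would realize the instance inside an influence cooperative game \infgame, so that the hardness holds for the influence spread $\inff(\cdot)$ and is therefore stronger than for general truncated submodular games.

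For the reduction, start from a SET COVER instance: a universe $U$ with $|U|=N$, sets $T_1,\dots,T_m$ with $\bigcup_j T_j=U$, and an integer $k<m$; the question is whether some $k$ of the sets cover $U$. I would build a bipartite \emph{deterministic} IC graph $G=(V\cup U,E;P)$ with seeds $V=\{v_1,\dots,v_m\}$, target set $U$, an edge $(v_j,w)$ exactly when $w\in T_j$, and all influence probabilities $p_e=1$. Then $\inff(S)=|N(S)|=\bigl|\bigcup_{j\in S}T_j\bigr|$, which is monotone submodular with $\inff(\emptyset)=0$, as required. The key design choice is to set the threshold to the \emph{whole} universe, $\eta=N=\inff(V)$, so that $\inff(S)\ge\eta$ holds iff $S$ is a set cover, and on every such active set $\inff(S)=N$ is \emph{constant}. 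Finally I would take the uniform candidate $x'_{v_j}=N/m$ for all $j$ and $r'=1-(k+\tfrac12)/m$. Then $x'(V)=N=\inff(V)$ and $x'_{v_j}\ge 0$, so the easy constraints are satisfied and the oracle cannot dodge the hard family; moreover $0<r'<1$ and $\eta>0$, matching the statement ``for some $\eta>0$''.

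With these choices the violation condition collapses to something transparent. For any active $S$ (a cover) we have $\inff(S)=N$, so $x'(S)<(1-r')\inff(S)$ becomes $|S|\cdot(N/m)<(1-r')N=(k+\tfrac12)(N/m)$, i.e. $|S|\le k$. Hence a violated constraint exists iff there is a set cover of size at most $k$, which is exactly the SET COVER question. Since the reduction is polynomial and $(x',r')$ is a valid query, a polynomial-time separation oracle for \lpofrlcv would decide SET COVER in polynomial time, forcing P=NP; this proves the theorem.

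The step I expect to be the main obstacle is the \emph{normalization}: I must simultaneously (a) neutralize the troublesome ratio $x'(S)/\inff(S)$ and (b) make the trivial constraints hold so the oracle is genuinely compelled to examine the exponential family. Choosing $\eta=\inff(V)$ achieves (a) by pinning $\inff(S)$ to the constant $N$ on all active sets, which turns the ratio inequality into a clean cardinality threshold; the uniform weights $x'_{v_j}=N/m$ achieve (b) while keeping the cardinality encoding intact. This is precisely the regime that breaks the singleton-optimality argument used for $\eta=0$ in Theorem~\ref{thm:existseparationoracle}: when no singleton reaches the floor, the oracle is forced to reason about overlaps among the $T_j$, and that is where the submodularity of $\inff$ injects the NP-hardness. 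If one instead wants a weighted reduction, the same scheme works provided every minimal cover still spans the full universe, so that $\inff$ remains constant on active sets.
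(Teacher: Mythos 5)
Your proposal is correct and follows essentially the same route as the paper: the paper reduces from DOMINANT-SET by building a deterministic bipartite influence graph, setting $\eta$ equal to the full spread so that the active coalitions are exactly the covers, and querying the oracle at a uniform allocation with $r'$ chosen so that feasibility is equivalent to every cover having size at least $k+1$. Your SET COVER version with $x'_{v_j}=N/m$ and $r'=1-(k+\tfrac12)/m$ is the same construction up to the (interchangeable) choice of NP-hard covering problem and a cosmetic rescaling of the query point.
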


Theorem \ref{thm:oraclehard} can not imply the NP-hardness of \rlcv.
However, the proof of Theorem \ref{thm:oraclehard} reveals an interesting connection between \rlcv problem and a series of well defined combinatorial problems.
We will report the proof of Theorem \ref{thm:oraclehard} and the generalized combinatorial problems in the appendix.

In the left of this section, we prove the NP-hardness of \rlcv, a stronger hardness result than which in Theorem \ref{thm:oraclehard}.
\begin{theorem}\label{thm:rleastcorenphard}
	It is NP-hard to compute \rleastcorevalue, even under influence cooperative game.
\end{theorem}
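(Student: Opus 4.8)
The plan is to prove NP-hardness by a polynomial reduction from a covering-type NP-complete problem — concretely, from \textsc{Set Cover} (equivalently \textsc{Vertex Cover}) — realized inside an influence cooperative game, so that the optimal value of \lpofrlcv encodes the answer to the covering instance. The starting observation, already visible in the proof of Theorem~\ref{thm:existseparationoracle}, is that the difficulty is concentrated entirely in the threshold constraint $\sigma(S)\ge\eta$: at the uniform allocation the separation problem for \lpofrlcv becomes ``among all coalitions covering at least $\eta$ targets, find the one of smallest size'', which is exactly minimum set cover. I would upgrade this separation link into hardness for the optimal value itself, since Theorem~\ref{thm:oraclehard} alone does not yield it.

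First I would construct the instance. Given a set-cover instance with universe $U$ and sets $\{N(i)\}_{i\in V}$, I take the social graph $G=(V\cup U,E;P)$ with a player $i\in V$ per set, a target per universe element, edges $i\to u$ whenever $u\in N(i)$, and all influence probabilities $p_e=1$. Under the IC model with deterministic edges the influence spread is the coverage function $\inff(S)=|\bigcup_{i\in S}N(i)|$, which I take as the submodular function $\sigma=\inff$ of the game; it is nonnegative, monotone, submodular, and — crucially — computable in polynomial time, so the reduction is polynomial and stays inside \infgame. I then set $\eta$ so that the profitable coalitions (those with $\sigma(S)\ge\eta$) are precisely the feasible solutions of the covering instance; the simplest choice $\eta=\sigma(V)=|U|$ makes them exactly the set covers, all of common profit $|U|=\sigma(V)$.

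Next I would analyze \lpofrlcv on this instance. With all profitable coalitions sharing the profit $\sigma(V)$, the covering constraints collapse to $x(V\setminus S)\le r\,\sigma(V)$ for every cover $S$, so $r^*=\min_{x\ge 0,\,x(V)=\sigma(V)}\max_{S\text{ cover}} x(V\setminus S)/\sigma(V)$; the worst complements $V\setminus S$ range over the maximal removable player sets, i.e.\ the complements of minimal covers. To read off the decision I would either argue directly about which players must be paid (a player owning a privately-covered target lies in every cover and must receive weight, while a player omitted by some minimal cover can be starved), or pass to the LP dual, which a short computation puts in the clean form $r^*=1-\sigma(V)\cdot\min_{y\ge 0}\{\max_i\sum_{S\ni i}y_S:\ \sum_S \sigma(S)\,y_S=1\}$, with $S$ ranging over profitable coalitions. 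Either route shows that $r^*$ crosses a fixed threshold $r_0$ exactly when a cover of the prescribed size exists, so computing \rleastcorevalue exactly (or to sufficient accuracy) would decide the covering instance.

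The main obstacle, and the place where the gadget must be engineered carefully, is establishing a clean gap between yes- and no-instances rather than a merely monotone dependence, especially since the set of active constraints is itself defined by the threshold. I must (a) exhibit, in a yes-instance, an explicit allocation that balances payments across the guaranteed feasible solution to push the worst ratio below $r_0$, and (b) prove, in a no-instance, that \emph{every} nonnegative allocation summing to $\sigma(V)$ leaves some profitable coalition with ratio strictly above $r_0$. Direction (b) is the delicate one: I expect to need auxiliary padding targets and duplicated players so that submodularity of \inff\ prevents a single allocation from simultaneously satisfying all minimal covers, and so that the threshold cliff at $\eta$ cannot be exploited to hide dissatisfaction inside sub-threshold coalitions. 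Making this separation quantitative and robust within the value-oracle model is the crux; once it is in place, NP-hardness of \rlcv, even restricted to \infgame, follows immediately.
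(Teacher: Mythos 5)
There is a genuine gap, and it sits exactly where you flagged it: your step (b), the no-instance lower bound, is not a technical loose end but the entire theorem, and your chosen invariant cannot supply it. In your construction with $\eta=\sigma(V)=|U|$, the successful coalitions are precisely the set covers, and the optimal value is $r^*=1-\frac{1}{|U|}\max_{x\ge 0,\,x(V)=|U|}\min_{S\ \mathrm{cover}}x(S)$. This max--min quantity is \emph{not} governed by the minimum cover size; by LP duality it is governed by the (fractional) \emph{packing} structure of the covers, i.e.\ by how many (fractionally) disjoint covers exist. A small example makes the failure concrete: with universe $\{a,b\}$ and sets $\{a,b\},\{a\},\{b\}$ the minimum cover has size $1$ and $r^*=1/2$; with universe $\{a,b,c\}$ and three copies of $\{a,b,c\}$ the minimum cover again has size $1$ but $r^*=2/3$. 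Two instances with identical minimum cover size have different $r^*$, so no threshold $r_0$ on $r^*$ can decide ``is there a cover of size $\le k$.'' This is also why the paper itself is careful to state that Theorem~\ref{thm:oraclehard} (whose dominant-set reduction is essentially your separation-oracle observation, evaluated at the uniform allocation) does \emph{not} imply NP-hardness of \rlcv: hardness of separating at one candidate point does not transfer to hardness of the optimal value, and your construction with $\eta=\sigma(V)$ merely re-expresses \rleastcorevalue as the ``adversarial weighted set cover'' problem of Remark~\ref{rem:adversarial}, whose own NP-hardness is not established anywhere (the paper presents it as a new open class of problems, not as a hardness source).

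The missing idea, which is the heart of the paper's actual proof, is to engineer the gap around the \emph{packing} invariant rather than the covering-size invariant. The paper reduces from SAT with a tripartite gadget arranged so that satisfiability of $F$ produces \emph{three pairwise disjoint} successful coalitions ($A'$, $B'$, and $\{Q\}$); since $x(V)=\sigma(V)$ must be split among them, some coalition receives at most a third of the total, forcing $r^*\ge 1-\frac{1}{3}\bigl(N+\frac{7}{8}\bigr)/\bigl(N+\frac{1}{2}\bigr)$. Conversely, unsatisfiability forces every successful coalition avoiding $Q$ to be large ($|S|\ge 2n+1$ out of $4n+1$ first-layer players), which permits an explicit allocation (weight $\approx\frac{1}{3}(N+\frac78)+\alpha$ on $Q$, the rest spread uniformly) whose worst relative dissatisfaction is strictly below that threshold. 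If you want to salvage your covering-based route, you would have to show NP-hardness of the adversarial weighted set cover problem itself --- i.e.\ hardness of the cover-packing value --- which is a different (and unresolved) question from the NP-hardness of set cover that you invoke.
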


\begin{proof}[Sketch]
	We construct a reduction from the SAT problem.
	A boolean formula is in conjunctive normal form (CNF) if it is expressed as an AND of clauses,
	each of which is the OR of one or more literals.
	The SAT problem is defined as follows:
	given a CNF formula $F$, determine whether $F$ has a satisfiable assignment.
	Let $F$ be a CNF formula with $m$ clauses $C_1, C_2, \cdots, C_m$, over $n$ literals $z_1, z_2, \cdots, z_n$.
	Without loss of generality, we set $m>4n$.
	
	We construct a social graph $G$ as follows:
	\begin{figure}[t]
		\centering
		\includegraphics[width=0.6\textwidth]{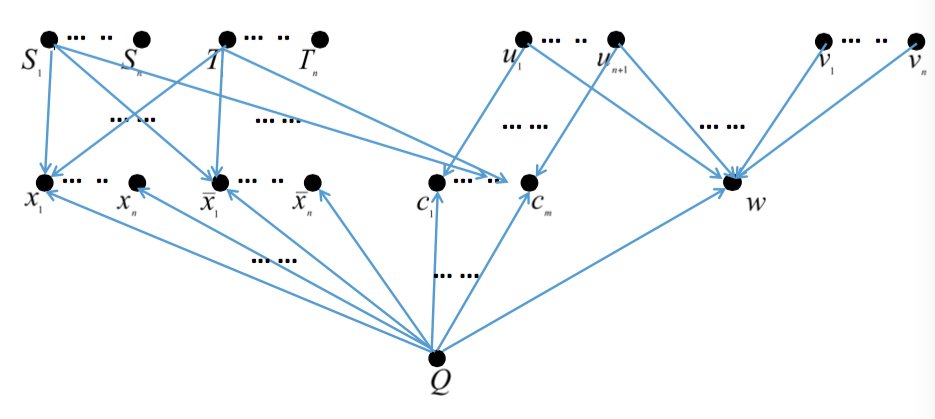}\\
		\caption{The reduction from SAT to \rleastcorevalue}\label{fig:reduction-sat}
	\end{figure}
	$G=(V_1\cup V_2\cup V_3, E)$ is a tripartite graph (see the sketch graph in Figure \ref{fig:reduction-sat}).
	In the first layer ($V_1$), there are two nodes $S_i$ and $T_i$ corresponding to each $i\in \{1, 2, \cdots, n\}$,
	$n+1$ dummy nodes labeled as $u_1, u_2, \cdots, u_{n+1}$ and $n$ dummy nodes labeled as $v_1, v_2, \cdots, v_n$.
	In the second layer ($V_2$), there are two nodes $x_i$ and $\overline{x}_i$ corresponding to each $i\in \{1, 2, \cdots, n\}$, one node $c_j$ for each $j\in\{1, 2, \cdots, m\}$ and a dummy node $w$.
	The third layer ($V_3$) contains only node $Q$.
	Edges exist only between the adjacent layers.
	For each $i\in \{1, 2, \cdots, n\}$, $S_i$ sends an edge to every node in
	$\{x_i, \overline{x}_i\}\cup\{c_j:$ clause $C_j$ contains literal $z_i, j\in\{1, 2, \cdots, m\}\}$.
	Similarly, for each $i\in \{1, 2, \cdots, n\}$, $T_i$ sends an edge to every node in
	$\{x_i, \overline{x}_i\}\cup\{c_j:$ clause $C_j$ contains literal $\overline{z}_i, j\in\{1, 2, \cdots, m\}\}$.
	The probabilities on edges sent form $S_i$ and $T_i$ are 1.
	There is an edge with influence probability 1 from $u_i$ to $c_i$ for any $i\in \{1, 2, \cdots, n\}$ and $m-n$ edges form $u_{n+1}$ to $c_{n+1}, c_{n+2}, \cdots, c_m$.
	There is an edge from $u_i$ to $w$ with influence probability $1-\sqrt[n+1]{1/2}$ for any $i\in \{1, 2, \cdots, n+1\}$.
	There is also exists an edge from $v_i$ to $w$ with influence probability $1-\sqrt[n]{1/2}$ for any $i\in \{1, 2, \cdots, n\}$.
	The left edges are  from $Q$ to all nodes in the second layer.
	The influence probability on edge $(Q, w)$ is $1/2$ and all other probabilities on edges sent from $Q$ is 1.
	The influence cooperative game defined on $G$ is $\Gamma(G)=(V=V_1\cup V_3, \inff(\cdot), \eta=2n+m+1/2)$.
	For convenient, we set $N=2n+m$.
	
	Suppose $r^*$ is the optimal solution of the relative least-core value of  $\Gamma(G)$
	We can prove that $r^*\geq 1-\frac{1}{3}(N+\frac{7}{8})/(N+\frac{1}{2})$  if $F$ is satisfiable and $r<1-\frac{1}{3}(N+\frac{7}{8})/(N+\frac{1}{2})$ if $F$ is un-satisfiable.
	The proof of this part is shown in the appendix.
\end{proof}

\section{Computing Absolute Least-Core Value}\label{sec:leastcorevalue}

\subsection{Hardness of \alcv}\label{sec:hardnessofleastcorevalue}
\begin{theorem}\label{thm:alcvhardness}
\alcv problem of influence cooperative game cannot be approximated within 1.139 under the unique games conjecture.
\end{theorem}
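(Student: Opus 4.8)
The plan is to give a gap-preserving (in fact gap-\emph{exact}) reduction from \textsc{Max-Cut}, whose optimal inapproximability under the unique games conjecture is governed by the Goemans--Williamson constant $\alpha_{GW}\approx 0.878$. Concretely, by the Khot--Kindler--Mossel--O'Donnell and Mossel--O'Donnell--Oleszkiewicz theorems, for every $\delta>0$ it is UG-hard to distinguish graphs $H=(W,F)$ whose maximum cut cuts at least a $c$ fraction of the edges (the YES case) from those cutting at most an $s$ fraction (the NO case), where $s/c\le\alpha_{GW}+\delta$. I would transport this gap to \alcv by building, from $H$, an influence cooperative game whose absolute least-core value equals \emph{exactly} half the maximum cut of $H$; the resulting inapproximability factor is then $c/s\approx 1/\alpha_{GW}\approx 1.139$.

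The construction is a pure coverage gadget that is faithfully realizable in the IC model. Take the seed players to be the vertices $W$ of $H$, introduce one target node $t_e\in U$ for every edge $e\in F$, and place a probability-$1$ arc from seed $i$ to $t_e$ whenever $i$ is an endpoint of $e$. Then $\inff(S)$ is deterministic and equals the number of edges incident to $S$, i.e. the coverage function $\mathrm{cov}(S)=|\{e\in F:e\cap S\neq\emptyset\}|$, which is monotone submodular with $\mathrm{cov}(\emptyset)=0$. Setting the threshold $\eta=0$ makes $f=\inff=\mathrm{cov}$, and since $\mathrm{cov}$ is non-additive whenever $H$ has an edge and no veto player exists, Lemma~\ref{lem:coreiff} gives \core $=\emptyset$, so \leastcorevalue is the relevant quantity.

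The heart of the argument is the identity \leastcorevalue $=\tfrac12\,\mathrm{MaxCut}(H)$. For the upper bound I would use the explicit allocation $x_i=\deg_H(i)/2$, which is nonnegative and satisfies $x(W)=\tfrac12\sum_i\deg_H(i)=|F|=f(W)$; writing $e(S)$ and $\partial(S)$ for the internal and boundary edges of $S$, one computes $\mathrm{cov}(S)-x(S)=\bigl(e(S)+\partial(S)\bigr)-\bigl(e(S)+\tfrac12\partial(S)\bigr)=\tfrac12\partial(S)$, whose maximum over $S$ is $\tfrac12\,\mathrm{MaxCut}(H)$. For the matching lower bound no LP duality is needed: for any feasible $x$ and the optimal cut $(A,\bar A)$, summing the two deficits gives $dv(A,x)+dv(\bar A,x)\ge \mathrm{cov}(A)+\mathrm{cov}(\bar A)-x(W)=\bigl(|F|+\mathrm{MaxCut}(H)\bigr)-|F|=\mathrm{MaxCut}(H)$, so $\max_S dv(S,x)\ge\tfrac12\,\mathrm{MaxCut}(H)$; here $\eta=0$ is exactly what ensures $A$ and $\bar A$ are admissible coalitions.

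Given the identity, the gap transfers verbatim: \leastcorevalue $\ge\tfrac{c}{2}|F|$ in the YES case and $\le\tfrac{s}{2}|F|$ in the NO case, both bounded away from $0$, so any approximation ratio strictly below $c/s\approx1.139$ would separate the two cases via the threshold $\tfrac{c}{2}|F|$ and refute the unique games conjecture. I expect the genuinely delicate points to be conceptual rather than computational: the crux is \emph{finding} a gadget whose least-core value is exactly proportional to the cut, with \textbf{no} additive slack, since any additive term would degrade the clean factor $1/\alpha_{GW}$ into a weaker constant. The remaining issues are bookkeeping---verifying that the IC semantics reproduce $\mathrm{cov}$ precisely, that the $\eta=0$ truncation is inert, and that the rounding of $1/\alpha_{GW}\approx1.1383$ is reconciled with the stated $1.139$ by sharpening the Max-Cut gap as needed.
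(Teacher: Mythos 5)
Your proposal is correct and follows essentially the same route as the paper: a reduction from \textsc{Max-Cut} via a probability-$1$ bipartite coverage gadget with $\eta=0$, the degree-based allocation for the upper bound, summing the deficits of a set and its complement for the lower bound, and transferring the unique-games gap at factor $1/\alpha_{GW}\approx 1.139$. The only cosmetic difference is that the paper duplicates the edge-nodes (a second layer $U_2$) so that the influence spread is twice the coverage function, making \leastcorevalue equal \emph{exactly} $\mathrm{MaxCut}$ rather than half of it, which changes nothing in the multiplicative hardness argument.
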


\begin{proof}[Sketch]
	We construct a reduction from MAX-CUT problem. Under our construction, for any instance of MAX-CUT problem, we can construct an instance of  \alcv problem such that the optimal solution of these two instances are equal. The detail proof is shown in our appendix. 
\end{proof}

\subsection{Approximating \leastcorevalue}\label{sec:approximateleastcorevalue}

In this section, we approximate \leastcorevalue by approximating the following linear programming (\lpprime):

\begin{equation*}\label{equ:lpprime}
\OnlyInShort{\small}
\begin{array}{ll}
\mbox{min}& \ \varepsilon\\
\mbox{s.t.}&\left\{
\begin{array}{ll}
 x(V)=\sigma(V)\\
 x(\{S\})\ge \sigma(\{S\})-\varepsilon  & \quad \forall S\subseteq V, \sigma(S)\geq \eta\\
 x(\{u\}) \ge 0 & \quad \forall u\in V\\
\end{array}\right.
\end{array}
\end{equation*}

The intractability of \lpprime lies on the exponential number of constraints and
    the hardness of identifying all successful coalitions.
We use a relaxed version \lprelax and a strengthen version \lpstrengthen of \lpprime to
    design an approximation algorithm of \leastcorevalue.
(\ref{equ:lprelax}) and (\ref{equ:lpstrengthen}) are formal definitions of \lprelax and \lpstrengthen, respectively.

\begin{equation}\label{equ:lprelax}
\OnlyInShort{\small}
\begin{array}{ll}
\mbox{min}& \ \varepsilon\\
\mbox{s.t.}&\left\{
\begin{array}{ll}
 x(V)=\sigma(V)\\
 x(S)\ge \eta-\varepsilon  & \quad \forall~S\subseteq V, \sigma(S)\geq \eta\\
 x(\{u\}) \ge 0 & \quad \forall~u\in V\\
\end{array}\right.
\end{array}
\end{equation}

\begin{equation}\label{equ:lpstrengthen}
\OnlyInShort{\small}
\begin{array}{ll}
\mbox{min}& \ \varepsilon\\
\mbox{s.t.}&\left\{
\begin{array}{ll}
 x(V)=\sigma(V)\\
 x(S)\ge \sigma(S)-\varepsilon  & \quad \forall~S\subseteq V\\
 x(\{u\}) \ge 0 & \quad \forall~u\in V\\
\end{array}\right.
\end{array}
\end{equation}

	Intuitively, \lprelax and \lpstrengthen denote absolute least-core values of two cooperative games with new profit functions.
	Specifically, \lprelax relaxes the constraints in \lpprime
	by reducing the profits of all successful coalitions excepting $V$ to $\eta$.
Formally, the profit function in \lprelax is $g(S)$:
$g(V)=\sigma(V)$, $\forall~S\subset V$,  $g(S)=\eta$ if $\sigma(S)\geq\eta$ and $g(S)=0$ otherwise.
The profit function in \lpstrengthen is $h(S)=\sigma(S)$, $\forall S\subseteq V$.
Clearly, \lpstrengthen strengthens \lpprime by increasing the profits of all unsuccessful coalitions.

Our main result in this section is shown in Theorem \ref{thm:approxmate_alcv}.

\begin{theorem}\label{thm:approxmate_alcv}
	$\forall~\delta>0$, there exists an approximate algorithm ${\cal A}$ of the \leastcorevalue problem  with running time in $poly(n, 1/\delta, \log\sigma(V))$,
	${\cal A}$ outputs $\varepsilon'_p$ such that $\varepsilon^*_p\leq \varepsilon_p'\leq\min\{\varepsilon^*_p+\sigma(V)-\eta+2\delta,~\max\{3\varepsilon^*_p,\eta\}\}$,.
	
\end{theorem}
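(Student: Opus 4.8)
The plan is to sandwich the target value $\varepsilon^*_p$ (the optimum of \lpprime) between the optima of the two auxiliary programs and then convert approximate solutions of those programs back into feasible allocations for \lpprime. Write $\varepsilon^*_r$ and $\varepsilon^*_s$ for the optima of \lprelax and \lpstrengthen. Since every constraint of \lprelax is implied by the corresponding constraint of \lpprime (we only lowered the right-hand side from $\sigma(S)-\varepsilon$ to $\eta-\varepsilon$ on successful coalitions), and every constraint of \lpprime is implied by \lpstrengthen (the latter also constrains the unsuccessful coalitions), comparing feasible regions immediately gives $\varepsilon^*_r\le\varepsilon^*_p\le\varepsilon^*_s$. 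First I would pin the strengthen side down more tightly: plugging an optimal allocation $x^*$ of \lpprime into \lpstrengthen, every successful $S$ has excess $\sigma(S)-x^*(S)\le\varepsilon^*_p$, while every unsuccessful $S$ has excess $\sigma(S)-x^*(S)\le\sigma(S)<\eta$ because $x^*\ge 0$; hence $\varepsilon^*_s\le\max\{\varepsilon^*_p,\eta\}$, and combined with the sandwich this yields the key identity $\varepsilon^*_s=\varepsilon^*_p$ whenever $\varepsilon^*_p\ge\eta$.

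Next I would record the two feasibility-transfer facts that turn the auxiliary optima into bounds on the value the algorithm outputs. Any \lpstrengthen-feasible allocation is \lpprime-feasible with the same $\varepsilon$, so it is a legitimate output; more generally every feasible output has value at least $\varepsilon^*_p$, which gives the lower bound $\varepsilon^*_p\le\varepsilon'_p$. Any \lprelax-feasible allocation $x$ with parameter $\varepsilon$ satisfies, for every successful $S$, $x(S)\ge\eta-\varepsilon\ge\sigma(S)-(\sigma(V)-\eta)-\varepsilon$ by monotonicity ($\sigma(S)\le\sigma(V)$), so it is \lpprime-feasible with parameter $\varepsilon+\sigma(V)-\eta$. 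Using the optimal \lprelax allocation therefore produces a feasible output of value at most $\varepsilon^*_r+\sigma(V)-\eta\le\varepsilon^*_p+\sigma(V)-\eta$, the additive half of the claim (the extra $2\delta$ absorbing the approximation error incurred below). The multiplicative half splits along the identity above: when $\varepsilon^*_p\ge\eta$, a factor-$3$ solution of \lpstrengthen has value at most $3\varepsilon^*_s=3\varepsilon^*_p=\max\{3\varepsilon^*_p,\eta\}$; when $\varepsilon^*_p<\eta$ we have $\varepsilon^*_s\le\eta$, so a value-$\le\eta$ allocation exists and delivering it gives output $\le\eta=\max\{3\varepsilon^*_p,\eta\}$ in that regime. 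The algorithm outputs the smallest-value allocation among the transferred relax and strengthen solutions, yielding the minimum of the two bounds.

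For the computation itself I would invoke the approximation machinery for submodular-profit least cores. \lpstrengthen is exactly the absolute least-core program of the un-truncated submodular profit game $h=\sigma$, so the factor-$3$ approximation of \cite{schulz2013approximating} for supermodular-cost (equivalently submodular-profit) games applies and returns, in polynomial time, a feasible allocation of value at most $3\varepsilon^*_s$. \lprelax is a max–min (concave) allocation problem, namely maximizing $\min_{S:\sigma(S)\ge\eta}x(S)$ over the simplex $\{x\ge 0,\ x(V)=\sigma(V)\}$, which I would solve to additive accuracy $\delta$ by a first-order or ellipsoid method. Binary search over $\varepsilon$ to precision $\delta$ together with the oracle error accounts for the additive $2\delta$, and all steps run in $\mathrm{poly}(n,1/\delta,\log\sigma(V))$ as required.

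The main obstacle is that both auxiliary programs have NP-hard separation oracles: separating \lpstrengthen is unconstrained submodular maximization of the submodular function $\sigma(S)-x(S)$, and separating \lprelax is minimum-weight submodular cover (the cheapest coalition reaching $\sigma(S)\ge\eta$). This is precisely why only approximate solutions (factor-$3$, respectively additive-$\delta$) are available, and why the guarantee must be stated as a combination rather than an exact value. The most delicate point is obtaining exactly $\max\{3\varepsilon^*_p,\eta\}$ rather than the weaker $3\max\{\varepsilon^*_p,\eta\}$: in the regime $\varepsilon^*_p<\eta$ the factor-$3$ strengthen solution only certifies value $\le 3\eta$, so I must instead produce a genuine value-$\le\eta$ allocation there. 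Establishing that such an allocation is constructible in polynomial time exactly when $\varepsilon^*_p<\eta$ (equivalently, certifying $\varepsilon^*_s\le\eta$ and rounding the strengthen relaxation at level $\eta$ rather than at its optimum) is the crux of the argument and the step I would spend the most care on.
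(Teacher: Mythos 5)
Your overall architecture is exactly the paper's: sandwich $\varepsilon_p^*$ between \lprelax and \lpstrengthen, prove the comparison inequalities of Lemma~\ref{lem:lpplprlps} (your transfer $x(S)\ge\eta-\varepsilon\ge\sigma(S)-(\sigma(V)-\eta)-\varepsilon$ and your bound $\varepsilon_s^*\le\max\{\varepsilon_p^*,\eta\}$ via the nonnegativity of the optimal \lpprime allocation are the paper's arguments verbatim), and approximate \lpstrengthen within factor $3$ through the maximum-dissatisfaction framework of \cite{schulz2013approximating} combined with a $1/3$-approximate unconstrained submodular maximizer, which is Lemma~\ref{lem:lpstrengthen}. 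The genuine gap is on the \lprelax side. You correctly note that separating \lprelax amounts to finding a cheapest coalition with $\sigma(S)\ge\eta$ and is intractable, yet you then claim to solve \lprelax to additive accuracy $\delta$ ``by a first-order or ellipsoid method'': the ellipsoid method needs precisely the separation oracle you have just ruled out, and no first-order scheme applies because the hardness sits in the exponentially many constraints, not in the objective. The paper's Lemma~\ref{lem:lprelax} supplies the missing idea (adapted from \cite{elkind2007computational}): rewrite \lprelax as \lprelaxnew, split it into feasibility programs $LFP_k$ with constraints $x(S)\ge k\delta$ for all successful $S$, discretize the allocation into multiples of a small $\delta'$, and build for each $k$ a \emph{partial} separation oracle ${\cal O}_k$ by dynamic programming over the discretized values, which either behaves as a separation oracle for $LFP_k$ or, upon failure, outputs a feasible solution of $LFP_{k-1}$. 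Running the ellipsoid method with these partial oracles locates $k'$ with $k^*-1\le k'\le k^*$, giving $\varepsilon_r^*\le\varepsilon_r'\le\varepsilon_r^*+2\delta$ in time ${\rm poly}(n,1/\delta,\log\sigma(V))$. Without some such mechanism, the additive half of your bound and the claimed running time are unsupported.

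The second difficulty you flag is real, but you do not resolve it --- and you should know that the paper does not resolve it either. In the regime $\varepsilon_p^*<\eta$, the factor-$3$ \lpstrengthen algorithm only certifies a value at most $3\varepsilon_s^*\le 3\eta$; ``rounding the strengthen relaxation at level $\eta$'' is not an algorithm, since producing an allocation of \lpstrengthen-value at most $\eta$ is not something the factor-$3$ machinery provides. Combining the paper's three lemmas in the only available way, namely outputting $\varepsilon_p'=\min\{\varepsilon_r'+\sigma(V)-\eta,\ \varepsilon_s'\}$, proves $\varepsilon_p^*\le\varepsilon_p'\le\min\{\varepsilon_p^*+\sigma(V)-\eta+2\delta,\ 3\max\{\varepsilon_p^*,\eta\}\}$, and the second coordinate collapses to the stated $\max\{3\varepsilon_p^*,\eta\}$ only when $\varepsilon_p^*\ge\eta$. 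So your honest caveat has actually located a discrepancy between Theorem~\ref{thm:approxmate_alcv} as stated and what the paper's lemmas establish, rather than a step the paper completes and you missed; nevertheless, since your write-up asserts the $\le\eta$ guarantee in that regime without any construction, it remains a gap in your proof as well.
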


We prove Theorem \ref{thm:approxmate_alcv} by show Lemma \ref{lem:lpplprlps}, Lemma \ref{lem:lpstrengthen} and Lemma \ref{lem:lprelax} in order.
\begin{lemma}\label{lem:lpplprlps}
	Suppose the optimal value of \lpprime, \lprelax and \lpstrengthen are $\varepsilon_p^*$, $\varepsilon_r^*$ and $\varepsilon_s^*$, respectively.
	Then, we have
	\begin{equation}\label{eq:lprandlpp}
	\varepsilon_p^*\leq \varepsilon_r^*+(\sigma(V)-\eta)\leq \varepsilon_p^*+(\sigma(V)-\eta),
	\end{equation}
	\begin{equation}\label{eq:lpsandlpp}
	\varepsilon_p^*\leq \varepsilon_s^*\leq \max\{\varepsilon_p^*,\eta\}.
\end{equation}
\end{lemma}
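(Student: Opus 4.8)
The plan is to treat all four inequalities as comparisons between the optimal values of linear programs over the same payoff vector $x$, exploiting only two elementary facts: the monotonicity of $\sigma$ (so $\sigma(S)\le\sigma(V)$ for every coalition $S$) and the nonnegativity constraints $x(\{u\})\ge 0$, which force $x(S)=\sum_{u\in S}x(\{u\})\ge 0$ for every $S$. Each inequality will be obtained either by observing that one program is a relaxation of another—so its optimum is no larger—or by taking a feasible pair $(x,\varepsilon)$ of one program, reusing the same $x$, and adjusting the objective value to exhibit feasibility for the other. No optimization is actually performed; everything reduces to checking constraints.

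For (\ref{eq:lprandlpp}) I would argue the two sides separately. The right-hand inequality is equivalent to $\varepsilon_r^*\le\varepsilon_p^*$, which holds because \lprelax is a relaxation of \lpprime: for every successful coalition $S$ the constraint $x(S)\ge\eta-\varepsilon$ of \lprelax is implied by the constraint $x(S)\ge\sigma(S)-\varepsilon$ of \lpprime, since $\sigma(S)\ge\eta$. Hence every feasible solution of \lpprime is feasible for \lprelax with the same $\varepsilon$. For the left-hand inequality $\varepsilon_p^*\le\varepsilon_r^*+(\sigma(V)-\eta)$ I would take an optimal pair $(x,\varepsilon_r^*)$ of \lprelax, keep the same $x$, and set $\varepsilon'=\varepsilon_r^*+(\sigma(V)-\eta)$. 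For every successful $S$, feasibility of \lprelax gives $x(S)\ge\eta-\varepsilon_r^*=\sigma(V)-\varepsilon'\ge\sigma(S)-\varepsilon'$, where the last step is exactly monotonicity $\sigma(S)\le\sigma(V)$. Thus $(x,\varepsilon')$ is feasible for \lpprime and $\varepsilon_p^*\le\varepsilon'$.

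For (\ref{eq:lpsandlpp}) the left inequality $\varepsilon_p^*\le\varepsilon_s^*$ is immediate: \lpstrengthen is obtained from \lpprime by adding the constraints $x(S)\ge\sigma(S)-\varepsilon$ for the unsuccessful coalitions, so any feasible solution of \lpstrengthen is feasible for \lpprime. For the right inequality $\varepsilon_s^*\le\max\{\varepsilon_p^*,\eta\}$ I would take an optimal $(x,\varepsilon_p^*)$ of \lpprime, keep $x$, and set $\varepsilon'=\max\{\varepsilon_p^*,\eta\}$, then verify $x(S)\ge\sigma(S)-\varepsilon'$ for all $S$ by a case split. If $\sigma(S)\ge\eta$, then $x(S)\ge\sigma(S)-\varepsilon_p^*\ge\sigma(S)-\varepsilon'$ because $\varepsilon'\ge\varepsilon_p^*$. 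If $\sigma(S)<\eta$, then $\sigma(S)-\varepsilon'\le\sigma(S)-\eta<0\le x(S)$, using $\varepsilon'\ge\eta$ together with the nonnegativity $x(S)\ge 0$. Hence $(x,\varepsilon')$ is feasible for \lpstrengthen.

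I expect no serious obstacle; the content is careful but routine LP-feasibility bookkeeping. The only two places that demand attention are the constructive directions that reuse $x$: recognizing that the additive slack $\sigma(V)-\eta$ in (\ref{eq:lprandlpp}) is precisely the largest possible gap $\sigma(S)-\eta$ over all coalitions, so that monotonicity closes the argument; and handling the unsuccessful coalitions in (\ref{eq:lpsandlpp}) through the $x(S)\ge 0$ constraint rather than any $\sigma$-based bound. The remaining thing to check is that the boundary case $S=V$ and the single-player nonnegativity constraints transfer correctly into each target program, which they do in each of the constructions above.
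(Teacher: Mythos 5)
Your proof is correct and follows essentially the same route as the paper's: each direction is established either by noting that one LP's constraints imply the other's (using $\sigma(S)\ge\eta$ for successful coalitions), or by reusing the optimal allocation $x$ of one program with a shifted $\varepsilon$, closing the gap via monotonicity ($\sigma(S)\le\sigma(V)$) for (\ref{eq:lprandlpp}) and via nonnegativity of $x(S)$ on unsuccessful coalitions for (\ref{eq:lpsandlpp}). Your write-up is in fact slightly more explicit than the paper's (which compresses the feasibility checks into one-line assertions), but there is no substantive difference in the argument.
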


\begin{lemma}\label{lem:lpstrengthen}
	There exists a polynomial time approximate algorithm of \lpstrengthen outputting $\varepsilon_s'$ such that $\varepsilon_s^*\leq\varepsilon_s'\leq 3\varepsilon_s^*$.
\end{lemma}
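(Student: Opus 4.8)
The plan is to recognize \lpstrengthen as exactly the absolute least-core-value linear program of the (untruncated) submodular profit cooperative game $(V,\sigma)$, i.e.\ the game whose profit function is $h(S)=\sigma(S)$ for every $S\subseteq V$. This is the profit-game counterpart of the supermodular cost setting treated in \cite{schulz2013approximating}, and the whole argument adapts that approach. Since the program has one constraint for every $S\subseteq V$, I would attack it with the ellipsoid method, so the real content is to supply a (necessarily approximate) separation oracle and then to track how the oracle's error propagates into the optimal value.

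For the separation step, fix a candidate pair $(x,\varepsilon)$ with $x\ge 0$ and $x(V)=\sigma(V)$; the remaining equality and nonnegativity constraints are checked directly. Feasibility then amounts to testing whether $\max_{S\subseteq V}\bigl(\sigma(S)-x(S)\bigr)\le\varepsilon$. Writing $g_x(S)=\sigma(S)-x(S)$, note that $g_x$ is submodular (a monotone submodular function minus the modular function $x$) with $g_x(\emptyset)=g_x(V)=0$, so its maximum is nonnegative. Maximizing $g_x$ exactly is NP-hard, but $g_x$ is an unconstrained, possibly non-monotone submodular function, so I would run the known constant-factor maximization routine --- the local-search algorithm of Feige--Mirrokni--Vondr\'ak or the deterministic double-greedy of Buchbinder et al.\ --- to obtain a set $\hat S$ with $g_x(\hat S)\ge\tfrac13\max_{S}g_x(S)$.

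This approximate maximizer is exactly what turns into a one-sided separation oracle. Given the returned $\hat S$: if $g_x(\hat S)>\varepsilon$ then the constraint $x(\hat S)\ge\sigma(\hat S)-\varepsilon$ is genuinely violated and I return it as a cut; otherwise $g_x(\hat S)\le\varepsilon$ forces $\max_{S}g_x(S)\le 3\varepsilon$, so $x$ is feasible for \lpstrengthen at level $3\varepsilon$. Feeding this oracle into the ellipsoid method for a fixed $\varepsilon$ therefore has only two outcomes: either it certifies infeasibility using only genuine cuts --- in which case \lpstrengthen at level $\varepsilon$ is infeasible and $\varepsilon_s^*>\varepsilon$ --- or it outputs an $x$ feasible at level $3\varepsilon$, witnessing $\varepsilon_s^*\le 3\varepsilon$. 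Binary searching $\varepsilon$ to the crossover point $\varepsilon^{\dagger}$ and returning $\varepsilon_s'=3\varepsilon^{\dagger}$ then gives $\varepsilon_s^*\le\varepsilon_s'\le 3\varepsilon_s^*$: the upper bound because the witness is genuinely feasible for \lpstrengthen at level $\varepsilon_s'$, and the lower bound because infeasibility just below $\varepsilon^{\dagger}$ keeps $\varepsilon_s^*$ above $\varepsilon^{\dagger}$; the factor $3$ is precisely the reciprocal of the $\tfrac13$ maximization guarantee.

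The step I expect to be the main obstacle is the separation oracle together with the bookkeeping around it. First, since $g_x$ need not be nonnegative (only its maximum is), one must argue that the constant-factor guarantee is invoked only to detect sets exceeding the nonnegative threshold $\varepsilon$, where the one-sided reduction above is sound; this is the delicate point in transplanting the unconstrained submodular-maximization result. Second, the ellipsoid method with an approximate oracle must be organized so that every cut it uses is a true violated inequality, which is what makes any infeasibility declaration valid. Finally, I would have to absorb the polynomial-precision error of the binary search and of the ellipsoid into the claimed $poly(n,1/\delta,\log\sigma(V))$ running time, verifying that the residual additive slack is negligible at the granularity at which \leastcorevalue is ultimately reported in Theorem~\ref{thm:approxmate_alcv}.
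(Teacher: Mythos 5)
Your proposal is correct and takes essentially the same route as the paper: the paper also identifies \lpstrengthen as the absolute least-core LP of the untruncated submodular game $(V,\sigma)$, reduces separation to maximizing the submodular function $\sigma(S)-x(S)$ (the $x$-maximum dissatisfaction problem), and invokes the deterministic $1/3$-approximation of Buchbinder et al.\ --- valid precisely because $\sigma(\emptyset)-x(\emptyset)=\sigma(V)-x(V)=0$, which settles the nonnegativity worry you flag --- to obtain a factor-$3$ approximation of $\varepsilon_s^*$. The only difference is that the paper delegates the ellipsoid/binary-search bookkeeping you spell out to the framework of \cite{schulz2013approximating}, where a $\rho$-approximate dissatisfaction oracle yields a $1/\rho$-approximation of the least-core value.
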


\begin{lemma}\label{lem:lprelax}
	$\forall~\delta>0$, there exists an algorithm of \lprelax outputting $\varepsilon_r'$
	such that $\varepsilon_r^*\leq\varepsilon_r'\leq\varepsilon^*_r+2\delta$, with runs time in $poly(n, 1/\delta, \log \sigma(V))$.
\end{lemma}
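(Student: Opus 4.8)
The plan is to recast \lprelax as a clean concave maximization and then solve it with a cutting-plane / subgradient method whose only nontrivial ingredient is an (approximate) separation oracle. Write $\Delta=\{x\ge 0:\ x(V)=\sigma(V)\}$ and set $v=\eta-\varepsilon$. The family of constraints ``$x(S)\ge\eta-\varepsilon$ for every successful $S$'' is equivalent to $\min_{S:\sigma(S)\ge\eta}x(S)\ge v$, so
\[
\varepsilon_r^{*}=\eta-\max_{x\in\Delta}C(x),\qquad C(x):=\min_{S:\sigma(S)\ge\eta}x(S).
\]
Since $C$ is a minimum of linear functions it is concave, and $\Delta$ is a simplex; equivalently, writing $P=\mathrm{conv}\{\mathbf 1_{S}:\sigma(S)\ge\eta\}$, the quantity $\max_{x\in\Delta}C(x)=\max_{x\in\Delta}\min_{z\in P}\langle x,z\rangle$ is a bilinear saddle point over two convex bodies. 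Thus computing $\varepsilon_r^{*}$ is an honest convex program, and the whole difficulty is concentrated in a single linear-optimization oracle over $P$.

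The engine is a supergradient/separation oracle for $C$: given $x$, either certify $C(x)\ge v$ or return a successful coalition $S$ with $x(S)<v$, whose indicator $\mathbf 1_{S}$ is then a supergradient of $C$ at $x$. Finding the exact minimizer $\arg\min_{\sigma(S)\ge\eta}x(S)$ is precisely \emph{minimum-cost submodular cover} for the monotone submodular $\sigma$, which is NP-hard, so I would replace it by an approximate oracle built from monotone submodular maximization under a knapsack budget: for a budget $b$ on the $x$-cost one maximizes $\sigma(S)$ subject to $x(S)\le b$ via the (continuous) greedy algorithm, and sweeps $b$ over a grid of granularity $\Theta(\delta)$ — this is where the $1/\delta$ factor enters, while the numeric range of the objective, scaled by $\sigma(V)$, contributes the $\log\sigma(V)$ factor (via binary search on $v\in[0,\sigma(V)]$ and through bit-complexity of the convex solver). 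Running the cutting-plane (ellipsoid) or projected-subgradient method over $\Delta$ with this oracle for $\mathrm{poly}(n,1/\delta,\log\sigma(V))$ steps should output $\hat v$ and hence $\varepsilon_r'=\eta-\hat v$ satisfying $\varepsilon_r^{*}\le\varepsilon_r'\le\varepsilon_r^{*}+2\delta$, the lower bound being immediate since any allocation the algorithm produces is feasible for \lprelax.

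The main obstacle is converting the oracle's guarantee into an \emph{additive} $2\delta$ error rather than a multiplicative one: submodular maximization under a knapsack comes with a $(1-1/e)$-type bound, and a naive multiplicative coverage slack of order $\eta/e$ would only yield an additive error proportional to $\sigma(V)$, not $2\delta$. The delicate point — and the reason the relaxation to \lprelax rather than \lpprime is made — is that in \lprelax every successful coalition carries the \emph{same} capped profit $\eta$, so a coalition that covers the threshold only up to a controlled slack of order $\delta$ is essentially as violating as the true cheapest cover. I would therefore round the coverage threshold down by $\Theta(\delta)$, solve the resulting maximization accurately enough on the $\delta$-grid (using monotonicity of $\sigma$ to convert the coverage slack into a value slack), and argue that the allocation returned is $2\delta$-optimal for the true program; verifying that the outer convex solver's accumulated optimization error and all scaled quantities stay within the same additive $\mathrm{poly}(\log\sigma(V))$ bit budget is then routine once this oracle analysis is in place.
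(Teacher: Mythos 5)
Your reformulation $\varepsilon_r^{*}=\eta-\max_{x\in\Delta}\min_{S:\sigma(S)\ge\eta}x(S)$ and the reduction of the whole problem to a separation/supergradient oracle are fine, but the proof collapses exactly at the point you yourself flag as ``the main obstacle,'' and the fix you sketch does not close it. Continuous greedy for knapsack-constrained monotone submodular maximization loses a \emph{multiplicative} factor in coverage: under budget $b$ it returns $S'$ with $\sigma(S')\ge(1-1/e)\max\{\sigma(T):x(T)\le b\}$. Such an $S'$ is in general not a successful coalition at all --- its coverage deficit can be as large as $\eta/e=\Theta(\sigma(V))$ --- so $\mathbf 1_{S'}$ is neither a violated constraint of \lprelax nor a supergradient of $C$; conversely, when greedy fails to reach $\eta$ within budget $v$, you cannot conclude that no successful coalition of cost $<v$ exists. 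Whichever way you wire this oracle into the cutting-plane loop, one side of the sandwich $\varepsilon_r^{*}\le\varepsilon_r'\le\varepsilon_r^{*}+2\delta$ breaks by an amount of order $\sigma(V)-\eta$, not $\delta$: either you cut with constraints that \lprelax does not contain (so the computed optimum can far exceed $\varepsilon_r^{*}+2\delta$), or you certify feasibility of infeasible points (so the output undershoots $\varepsilon_r^{*}$, and your claim that the lower bound is ``immediate'' fails, since the returned allocation need not be feasible). Rounding the coverage threshold down by $\Theta(\delta)$ and invoking monotonicity cannot convert a multiplicative $\Theta(\sigma(V))$ deficit in the $\sigma$-coordinate into an additive $\Theta(\delta)$ slack in the $x$-coordinate; that conversion is the entire content of the lemma, and it is missing.

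Moreover, no cleverer subroutine can rescue this architecture: an oracle that, given $(x,v)$, either finds a successful $S$ with $x(S)<v$ or certifies that every successful $S$ has $x(S)\ge v-\delta$, running in time $poly(n,1/\delta)$, would decide DOMINANT-SET (take the construction in the proof of Theorem~\ref{thm:oraclehard}, set $x\equiv\mathbf 1$, $v=k+1$, $\delta=1/2$), so it cannot exist unless P$=$NP. The paper therefore routes around the separation problem differently: it rewrites \lprelax as the maximization \lprelaxnew, sweeps the discretized levels $k\delta$ as a family of feasibility programs $LFP_k$, and runs the ellipsoid method with a \emph{partial} separation oracle $\mathcal O_k$ --- a pseudo-polynomial dynamic program over allocations rounded to multiples of $\delta'=\delta/M$, generalizing Elkind et al.'s DP for weighted voting games --- which either acts as a proper separation oracle for $LFP_k$ or, when it fails, still outputs a feasible solution of $LFP_{k-1}$; the $2\delta$ in the statement is precisely this one-level slippage plus rounding. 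The crucial structural difference is that all approximation in the paper lives in the allocation coordinate, while the coverage threshold $\sigma(S)\ge\eta$ is never approximated (the DP values a candidate set by prefix marginal gains, which is exact for additive $\sigma$); your oracle puts the error in the $\sigma$-coordinate, where it cannot be controlled.
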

The proofs of Lemma \ref{lem:lpplprlps} to Lemma \ref{lem:lprelax} rely heavily on mathematical computation and we report them in the appendix.

\section{Computing Least Average Dissatisfaction Value}\label{sec:ladv}
Based on Definition \ref{def:acv}, \averagecorevalue equals the optimal value of
    the following linear programming:
\begin{equation}\label{equ:ladv}
\OnlyInShort{\small}
\begin{array}{ll}
\mbox{min}& F(x)=\frac{1}{2^n}\sum_{S\subseteq V}\max \{f(S)-x(S), 0\}\\
\mbox{s.t.}&\left\{
\begin{array}{ll}
 x(V)=\sigma(V)\\
 x(\{i\}) \ge 0 & \quad \forall~i\in V\\
\end{array}\right.
\end{array}
\end{equation}
Where $f(S)=\sigma(S)$ if $\sigma(S)\geq \eta$ and $f(S)=0$ otherwise.
There are exponential terms in $F(x)$, however, we can utilize stochastic gradient algorithm to approximate the optimal solution of (\ref{equ:ladv}).
This is because the object function $F(x)$ is a convex function (Lemma \ref{lem:gsubmodular}) and the feasible solution area in (\ref{equ:ladv}) is a convex set.

\begin{lemma}\label{lem:gsubmodular}
	$F(x)$ is a convex function.
\end{lemma}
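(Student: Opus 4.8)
The plan is to exploit the fact that $F(x)$ is nothing more than an average of very simple convex building blocks, and then invoke the two standard closure properties of convexity: that a pointwise maximum of convex functions is convex, and that a nonnegative combination of convex functions is convex. So there is no need to manipulate the (exponentially many) terms directly; the argument is purely structural.

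First I would fix a single coalition $S\subseteq V$ and analyze the term $g_S(x)=\max\{f(S)-x(S),0\}$ as a function of the allocation vector $x\in\mathbb{R}^{|V|}$. Here $f(S)$ is a constant independent of $x$, and $x(S)=\sum_{i\in S}x_i$ is linear in $x$; hence $f(S)-x(S)$ is an affine function of $x$, and in particular convex. The constant function $0$ is convex as well. Since the pointwise maximum of finitely many convex functions is convex, each summand $g_S(x)=\max\{f(S)-x(S),0\}$ is convex on $\mathbb{R}^{|V|}$. (Equivalently, one may view $g_S$ as the composition of the convex nondecreasing map $t\mapsto\max\{t,0\}$ with the affine map $x\mapsto f(S)-x(S)$, which is convex by the composition rule.)

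Second I would assemble the whole objective: $F(x)=\frac{1}{2^n}\sum_{S\subseteq V}g_S(x)$ is a nonnegative linear combination (indeed an average) of the convex functions $g_S$, and any such combination of convex functions is again convex. Therefore $F(x)$ is convex, which is the claim.

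I do not expect a genuine obstacle in this lemma; the only point requiring a moment of care is recognizing that each term is a maximum of an affine function and the constant $0$, so that the two preservation rules apply verbatim. This convexity, together with the observation that the feasible region $\{x:x(V)=\sigma(V),\,x_i\ge 0\}$ is a convex set, is exactly what legitimizes running the stochastic gradient descent algorithm on (\ref{equ:ladv}) to approximate \averagecorevalue.
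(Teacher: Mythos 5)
Your proof is correct and follows essentially the same route as the paper: the paper's displayed computation is just an explicit, termwise verification of the convexity inequality for each summand $\max\{f(S)-x(S),0\}$, which is exactly what you obtain by citing the max-of-affine and nonnegative-combination closure rules. No gap; your version merely packages the same argument through standard convexity-preservation facts rather than writing out the inequality.
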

The proof of Lemma \ref{lem:gsubmodular} is shown in our appendix.
The stochastic gradient descent algorithm
	(cf. \cite{shalev2014understanding}) can be used to compute \averagecorevalue (see Algorithm \ref{alg:sgd}).

\begin{algorithm}[h]
	\OnlyInShort{\small}
\renewcommand{\algorithmicrequire}{\textbf{Input:}}
\renewcommand\algorithmicensure {\textbf{Output:}}
\caption{Stochastic gradient descent for \ladv}\label{alg:acv}
\begin{algorithmic}[1]\label{alg:sgd}
\STATE	\textbf{Parameters:} Scaler $\alpha>0$, integer $T>0$
\STATE  \textbf{Initialize:} $\textbf{X}^1=\textbf{0}$, $t=0$.\\
\STATE  Set $D=\{\textbf{X}:\textbf{X}_i\geq 0 (\forall~i\in V),~\sum_{i\in V}\textbf{X}_i= \sigma(V)\}$.\\
\FOR {$t=1$ to $T$}
	\STATE  /*choose a random $\textbf{Y}^t$ such that $\mathbb{E}[\textbf{Y}^t|\textbf{X}^t]$ is a subgradient of $F$.*/
	\STATE	Uniformly at random choose a set $S\in 2^V$.\\ \label{line:beginchoose}
	\IF {$f(S)\geq\textbf{X}^t(S)$}
		\STATE Set $\textbf{Y}^t=(-\textbf{1}_S, \textbf{0}_{V\setminus S})$.\\
	\ELSE
			\STATE Set $\textbf{Y}^t=\textbf{0}$.
	\ENDIF \label{line:endchoose}
    \STATE  update $\textbf{X}^{t+\frac{1}{2}}=\textbf{X}^t-\alpha \textbf{Y}^t$.\\
    \STATE  /*Project $\textbf{X}^{t+\frac{1}{2}}$ to D*/
    \STATE  $\textbf{X}^{t+1}=\arg\min_{\textbf{X}\in D} \|\textbf{X}-\textbf{X}^{t+\frac{1}{2}}\|^2 $.\\
\ENDFOR
\RETURN $\hat{F}=\min \{F(\textbf{X}^t)\}_{t\in \{1,2,\cdots, T\}}$.
\end{algorithmic}
\end{algorithm}


Let $F^*$ be the optimal solution of \averagecorevalue, $\hat {F}$ be the output of Algorithm \ref{alg:sgd} and the profit of grand coalition $\sigma(V)=V$. Then, the performance of Algorithm \ref{alg:sgd} can be formalized in the following theorem.
\begin{theorem}\label{thm:sgd}
	$\forall$ $\varepsilon>0$, $\mathbb{E}[\hat {F}]-F^*\leq \varepsilon$ if $T\geq\frac{\sigma(V)^4n^4}{\varepsilon^2}$ and $\alpha=\sqrt{\frac{\sigma(V)^4}{Tn^4}}$ in Algorithm \ref{alg:acv}.
\end{theorem}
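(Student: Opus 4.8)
The plan is to recognize Algorithm~\ref{alg:sgd} as projected stochastic subgradient descent applied to the convex program~(\ref{equ:ladv}), and to invoke the standard convergence guarantee for this method (cf.~\cite{shalev2014understanding}). By Lemma~\ref{lem:gsubmodular} the objective $F$ is convex, and the feasible region $D=\{\mathbf{X}:\mathbf{X}_i\ge 0,~\sum_{i\in V}\mathbf{X}_i=\sigma(V)\}$ is a scaled simplex, hence a compact convex set. The generic theorem states that if $\mathbb{E}[\mathbf{Y}^t\mid\mathbf{X}^t]$ is a subgradient of $F$ at $\mathbf{X}^t$, if $\|\mathbf{Y}^t\|\le\rho$, and if the initial distance to the optimum satisfies $\|\mathbf{X}^1-\mathbf{X}^*\|\le B$, then with step size $\alpha=B/(\rho\sqrt{T})$ one has $\mathbb{E}[\frac{1}{T}\sum_{t=1}^T F(\mathbf{X}^t)]-F^*\le B\rho/\sqrt{T}$. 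The bulk of the work is therefore to verify the unbiased-subgradient property and to bound $\rho$ and $B$.

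First I would establish that $\mathbb{E}[\mathbf{Y}^t\mid\mathbf{X}^t]$ is a subgradient of $F$ at $\mathbf{X}^t$. Writing $F(x)=\frac{1}{2^n}\sum_{S\subseteq V}\phi_S(x)$ with $\phi_S(x)=\max\{f(S)-x(S),0\}=\max\{f(S)-\langle\mathbf{1}_S,x\rangle,0\}$, each $\phi_S$ is convex and piecewise linear, with subdifferential $\{-\mathbf{1}_S\}$ when $f(S)>x(S)$, $\{\mathbf{0}\}$ when $f(S)<x(S)$, and the segment between $-\mathbf{1}_S$ and $\mathbf{0}$ when $f(S)=x(S)$; in particular $-\mathbf{1}_S$ is a valid subgradient of $\phi_S$ whenever $f(S)\ge x(S)$. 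Since lines~\ref{line:beginchoose}--\ref{line:endchoose} draw $S$ uniformly from $2^V$ and set $\mathbf{Y}^t=-\mathbf{1}_S$ exactly when $f(S)\ge\mathbf{X}^t(S)$ and $\mathbf{Y}^t=\mathbf{0}$ otherwise, we obtain $\mathbb{E}[\mathbf{Y}^t\mid\mathbf{X}^t]=\frac{1}{2^n}\sum_{S:f(S)\ge\mathbf{X}^t(S)}(-\mathbf{1}_S)$, a sum of valid subgradients of the $\phi_S$ and hence a subgradient of $F$ at $\mathbf{X}^t$.

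Next I would bound the two constants. Since $\mathbf{Y}^t$ is either $\mathbf{0}$ or $-\mathbf{1}_S$ for some $S\subseteq V$, we have $\|\mathbf{Y}^t\|_2=\sqrt{|S|}\le\sqrt{n}$, giving a bound on $\rho$; and since every point of $D$ has nonnegative coordinates summing to $\sigma(V)$, its Euclidean norm is at most $\sigma(V)$, so with $\mathbf{X}^1=\mathbf{0}$ the initial distance $\|\mathbf{X}^1-\mathbf{X}^*\|=\|\mathbf{X}^*\|\le\sigma(V)$ bounds $B$. Feeding these into the generic guarantee yields $\mathbb{E}[\frac{1}{T}\sum_t F(\mathbf{X}^t)]-F^*\le B\rho/\sqrt{T}$. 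Because the algorithm returns $\hat F=\min_t F(\mathbf{X}^t)\le\frac{1}{T}\sum_t F(\mathbf{X}^t)$, the same upper bound holds for $\mathbb{E}[\hat F]-F^*$, and choosing $T$ and $\alpha$ as in the statement drives this below $\varepsilon$; matching the precise form of $T\ge\sigma(V)^4 n^4/\varepsilon^2$ and $\alpha=\sqrt{\sigma(V)^4/(Tn^4)}$ amounts to recording the (conservatively estimated) values of $B$ and $\rho$ in the generic inequality.

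The step I expect to be the main obstacle is the unbiased-subgradient verification, specifically handling the boundary case $f(S)=\mathbf{X}^t(S)$ correctly so that the expectation is genuinely a subgradient of $F$ rather than merely an approximation of its gradient. Once the subdifferential of each $\phi_S$ is identified and the uniform sampling is matched against the factor $1/2^n$, everything else reduces to the off-the-shelf convergence theorem for projected stochastic subgradient descent together with the two elementary norm bounds; a secondary, purely bookkeeping point is tracking those bounds through the generic inequality to recover the stated constants.
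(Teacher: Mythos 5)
Your proposal is correct and follows essentially the same route as the paper, which simply invokes the projected stochastic subgradient descent analysis of Chapter~14 of \cite{shalev2014understanding} after noting that $\mathbb{E}[\mathbf{Y}^t\mid\mathbf{X}^t]$ is a subgradient of $F$ at $\mathbf{X}^t$. You fill in the details the paper leaves implicit (the subdifferential decomposition over the $\phi_S$, the norm bounds on $\mathbf{Y}^t$ and on the feasible simplex, and the step from the average iterate to $\hat F=\min_t F(\mathbf{X}^t)$), and you correctly observe that the stated $T$ and $\alpha$ correspond to conservative choices of these constants.
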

Following  the analysis in Chapter 14 of \cite{shalev2014understanding},
Theorem \ref{thm:sgd} holds since it is easy to check that $\mathbb{E}[\textbf{Y}^t|\textbf{X}^t]$ is a subgradient of $F(\textbf{X})$ at node $\textbf{X}^t$, for any $t\in [T]$ (line \ref{line:beginchoose} - line\ref{line:endchoose} in Algorithm \ref{alg:sgd}).
\section{Conclusion and future work}
In this paper, we study the core related solution concepts of truncated submodular profit cooperative game.
One possible future work is to change the way of truncating a function.
For example, we can set $f(S)=\sigma(S)$ if $|S|\geq k$ and $f(S)=0$ otherwise.
This setting is a special case of the setting in our paper and maybe we can try to design algorithms for it.
In this paper, we prove that computing the relative least-core value is NP-hard.
We also prove that the relative least-core value can be solved in polynomial time in a special case.
A directly future work is to design an approximate algorithm of \rlcv under general case.

\newpage
\appendix
\section{Appendix of section \ref{sec:core}}

\begin{proof}[Proof of Theorem \ref{thm:core_polysolvable}]
	We can design the following polynomial time process to check the emptiness of \core and find an allocation in \core when \core$\neq\emptyset$.\\
	\textbf{Step 1:} Query $\sigma(V\setminus\{i\})$ for all $i\in V$ from value oracle. If there exists $j\in V$ such that $\sigma(V\setminus\{j\})<\eta$, go to Step 2, otherwise, go to Step 3.\\
	\textbf{Step 2:} Return $x=(0, \cdots,x_j=\sigma(V), 0, \cdots, 0)$ $\in$ \core.\\
	\textbf{Step 3:} Query $\sigma(V)$ and $\sigma(\{i\})$ for all $i\in V$ from value oracle. If $\sigma(V)=\sum_{i\in V}\sigma(\{i\})$, go to Step 4, otherwise, go to Step 5.\\
	\textbf{Step 4:} Return $x=(\sigma(\{1\}), \cdots, \sigma(\{n\}))$ $\in$ \core.\\
	\textbf{Step 5:} Assert that \core$=\emptyset$.
\end{proof}

\section{Appendix of Section \ref{sec:relativeleastcorevalue}}
\begin{proof}[Proof of Theorem \ref{thm:oraclehard}]
		We construct a reduction from NP-complete problem DOMINANT-SET~\cite{michael1979computers}. Given an undirected graph $G=(V,E)$ and an integer $k\in \mathbb{N}$, the DOMINANT-SET problem concerns testing whether there exists a dominant set of $G$ with size no more than $k$. A dominant set is a subset $S\subseteq V$ such that each vertex in $V\setminus S$ is adjacent to at least one vertex in $S$.

	Given any instance of DOMINANT-SET problem $(G=(V,E); k)$, we construct a social graph $G'$ as follows:
	The vertex set in $G'$ is $V'=V_1\cup V_2$, where $V_1=V_2=V$.
	For each node $i\in V_1$ and $j\in V_2$, there is a directed edge $(i,j)$ in $G'$ if and only if
	either $(i,j)\in E$ in $G$ or $i=j$.
	The influence probability on each edge is 1.
	
	The influence cooperative game defined on $G'$ is $\Gamma(G')=(V_1,\inff(\cdot ), \eta=|V_1|=n)$.
	Thus, the linear programming corresponding to the relative least-core value of $\Gamma(G')$ is:
	\begin{equation}\label{equ:oraclenphard}
	\OnlyInShort{\small}
	\begin{array}{ll}
	\mbox{min}& r\\
	\mbox{s.t.}&\left\{
	\begin{array}{ll}
	x(V_1)=n\\
	x(S)\ge (1-r)n  & \forall~S\subseteq V_1,~ \inff(S)=n\\
	x(\{i\}) \ge 0 &  \forall~i\in V_1\\
	\end{array}\right.
	\end{array}
	\end{equation}
	
	Now we prove that the DOMINANT-SET problem can be solved in polynomial time
	if there exists a polynomial time separation oracle of (\ref{equ:oraclenphard}).
	Given a candidate solution $(x',r')$, where $x'_i=1$ for any $i\in V_1$ and $r'=1-(k+1)/n$.
	Suppose there exists a polynomial time separation oracle $\mathcal{O}$ of (\ref{equ:oraclenphard}).
	Then $\forall~S\subseteq V_1,~ \inff(S)=n$, we can decide whether $|S|\ge (\frac{k+1}{n})n=k+1$ in polynomial time.
	Note that $\{S: S\subseteq V_1,~ \inff(S)=n\}$ is the set of all dominant sets of $G$.
	Thus, for any $G$'s dominant set $S$, $(x', r')$ is a feasible solution if and only if $|S|\ge k+1$.
	In other words, having $\mathcal {O}$, we can decide whether there exists a dominant set with size no more than $k$.
\end{proof}

In remark \ref{rem:adversarial}, we introduce a class of combinatorial optimization problems inspired from the proof process of Theorem \ref{thm:oraclehard}.

\begin{remark}\label{rem:adversarial} 
	We define an {\em adversarial} version of the classical weighted set cover problem:
Given a ground set $U$, a collection of subsets $\mathcal{S}\subseteq$ $2^U$, a weight budget $M$.
The objective of the {\em adversarial} weighted set cover problem is to allocate weight among subsets in $\cal{S}$ such that the minimum weight of all set covers is maximum.

	Formally, the objective of the {\em adversarial} weighted set cover problem is: \begin{center}
		$\max_{w: \sum_{S\in\mathcal{S}}w(S)\leq M}\min_{\mathcal{C}: \mathcal{C}~is~a~set~cover}\sum_{S\in\mathcal{C}}w(S)$, 
	\end{center}
where $w$ is a nonnegative allocation vector.

Similarly, we can define {\em adversarial} weighted vertex cover problem,  {\em adversarial} weighted dominant set problem, and so on.

	The following argument shows that the adversarial weighted set cover (dominant set, vertex cover, etc.) problem is a special instance of \rlcv.
	When $\eta=\sigma(V)=M$,
	\rleastcorevalue can be denoted more compactly:
	\begin{center}
		\rleastcorevalue$=\min_{x:x(V)=M, x\geq0}\max_{S:\sigma(S)=M}(1-\frac{x(S)}{M})$.
	\end{center}
	Thus, it is enough to compute
\begin{center}
		$\max_{x:x(V)=M, x\geq0}\min_{S:\sigma(S)=M} x(S)$.
\end{center}
	Given any instance of set cover problem,
	similar to the construction in the proof of Theorem \ref{thm:oraclehard},
	it is not difficult to construct a social graph such that
	$\inff(S)$ equals the number of elements covered by $S$, for any collection $S$.
	Thus, the adversarial weighted set cover problem is a special instance of \rlcv problem.
\end{remark}

\begin{proof}[Proof of Theorem \ref{thm:rleastcorenphard}]
	We construct a reduction from the SAT problem.
	A boolean formula is in conjunctive normal form (CNF) if it is expressed as an AND of clauses,
	each of which is the OR of one or more literals.
	The SAT problem is defined as follows:
	given a CNF formula $F$, determine whether $F$ has a satisfiable assignment.
	Let $F$ be a CNF formula with $m$ clauses $C_1, C_2, \cdots, C_m$, over $n$ literals $z_1, z_2, \cdots, z_n$.
	Without loss of generality, we set $m>4n$.
	
	We construct a social graph $G$ as follows:
	\begin{figure}[t]
		\centering
		\includegraphics[width=0.8\textwidth]{reduction-sat.png}\\
		\caption{The reduction from SAT to \rleastcorevalue}\label{fig:reduction-sat_appendix}
	\end{figure}
	$G=(V_1\cup V_2\cup V_3, E)$ is a tripartite graph (see the sketch graph in Figure \ref{fig:reduction-sat_appendix}).
	In the first layer ($V_1$), there are two nodes $S_i$ and $T_i$ corresponding to each $i\in \{1, 2, \cdots, n\}$,
	$n+1$ dummy nodes labelled as $u_1, u_2, \cdots, u_{n+1}$ and $n$ dummy nodes labelled as $v_1, v_2, \cdots, v_n$.
	In the second layer ($V_2$), there are two nodes $x_i$ and $\overline{x}_i$ corresponding to each $i\in \{1, 2, \cdots, n\}$, one node $c_j$ for each $j\in\{1, 2, \cdots, m\}$ and a dummy node $w$.
	The third layer ($V_3$) contains only node $Q$.
	Edges exist only between the adjacent layers.
	For each $i\in \{1, 2, \cdots, n\}$, $S_i$ sends an edge to every node in
	$\{x_i, \overline{x}_i\}\cup\{c_j:$ clause $C_j$ contains literal $z_i, j\in\{1, 2, \cdots, m\}\}$.
	Similarly, for each $i\in \{1, 2, \cdots, n\}$, $T_i$ sends an edge to every node in
	$\{x_i, \overline{x}_i\}\cup\{c_j:$ clause $C_j$ contains literal $\overline{z}_i, j\in\{1, 2, \cdots, m\}\}$.
	The probabilities on edges sent form $S_i$ and $T_i$ are 1.
	There is an edge with influence probability 1 from $u_i$ to $c_i$ for any $i\in \{1, 2, \cdots, n\}$ and $m-n$ edges form $u_{n+1}$ to $c_{n+1}, c_{n+2}, \cdots, c_m$.
	There is an edge from $u_i$ to $w$ with influence probability $1-\sqrt[n+1]{1/2}$ for any $i\in \{1, 2, \cdots, n+1\}$.
	There is also exists an edge from $v_i$ to $w$ with influence probability $1-\sqrt[n]{1/2}$ for any $i\in \{1, 2, \cdots, n\}$.
	The left edges are  from $Q$ to all nodes in the second layer.
	The influence probability on edge $(Q, w)$ is $1/2$ and all other probabilities on edges sent from $Q$ is 1.
	The influence cooperative game defined on $G$ is $\Gamma(G)=(V=V_1\cup V_3, \inff(\cdot), \eta=2n+m+1/2)$.
	For convenient, we set $N=2n+m$.

	Under the above construction, if $F$ is satisfiable and the corresponding assignment is $\{y_1, y_2, \cdots, y_n\}$.
	Let $A=\{S_i: y_i=1, i\in \{1, 2, \cdots, n\}\}\cup \{T_i: y_i=0, i\in \{1, 2, \cdots, n\}\}$,
	$B=\{S_i: y_i=0, i\in \{1, 2, \cdots, n\}\}\cup \{T_i: y_i=1, i\in \{1, 2, \cdots, n\}\}$.
	Thus, $A$ can active all nodes in the second layer except $w$ and
	$B$ can active all nodes in $\{x_1, x_2, \cdots, x_n\}\cup \{\overline{x}_1, \overline{x}_2, \cdots, \overline{x}_n\}$.
	Therefore, there are three disjoint successful coalitions $A'=A\cup \{v_1, v_2, \cdots, v_n\}$, $B'=B\cup \{u_1, u_2, \cdots, u_{n+1}\}$ and $Q$ which means $\inff(A')\geq\eta$, $\inff(B')\geq\eta$ and $\inff(Q)\geq\eta$.
	Suppose $r^*$ is the optimal solution of the relative least-core value of  $\Gamma(G)$ and $x^*$ is an optimal allocation.
	We can prove $r^*\geq 1-\frac{1}{3}(N+\frac{7}{8})/(N+\frac{1}{2})$.
	This conclusion can be derived by separately considering cases $x^*(Q)> \frac{1}{3}(N+\frac{7}{8})$, $x^*(Q)<  \frac{1}{3}(N+\frac{7}{8})$ and $x^*(Q)= \frac{1}{3}(N+\frac{7}{8})$.
	
	When $F$ is un-satisfiable, it is sufficient for our proof if we can find an solution $(x, r)$
	such that $r<1-\frac{\frac{1}{3}(N+\frac{7}{8})}{N+\frac{1}{2}}$.
	Note that when $F$ is un-satisfiable, then for any $S\subseteq V$,  $|S|\geq 2n+1$ if $\sigma(S)\geq \eta$.
	Otherwise, we can construct an assignment such that $F$ is satisfiable.
	Let $x(Q)=\frac{1}{3}(N+\frac{7}{8})+\alpha$ and for any $v\in V$, $x(v)=\frac{\frac{2}{3}(N+\frac{7}{8})-\alpha}{4n+1}$.
	
	The left is to prove that there exists a positive $\alpha$ such that $\frac{x(Q)}{\inff(Q)}\leq \frac{x(S)}{\inff(S)}$ for any $S$ satisfying $\inff(S)\geq \eta$.
	We prove the above inequality by considering the following two cases:\\
	(i) $Q\in S$: In this case, 
	\begin{equation*}
	\begin{aligned}
	\frac{x(S)}{\inff(S)}&\geq \frac{x(Q)+x(S\setminus \{Q\})}{\inff(\{Q\})+ 1}\\
	&\geq \min\{\frac{x(Q)}{\inff(\{Q\})},x(S\setminus \{Q\})\}\\
	&\geq \min \{\frac{\frac{1}{3}(N+\frac{7}{8})+\alpha}{N+\frac{1}{2}}, \frac{\frac{2}{3}(N+\frac{7}{8})-\alpha}{4n+1}\}.
	\end{aligned}
	\end{equation*}
	There exists $\alpha>0$ such that $\frac{\frac{1}{3}(N+\frac{7}{8})+\alpha}{N+\frac{1}{2}}<\frac{\frac{2}{3}(N+\frac{7}{8})-\alpha}{4n+1}$
	since $\frac{\frac{1}{3}(N+\frac{7}{8})}{N+\frac{1}{2}}<\frac{\frac{2}{3}(N+\frac{7}{8})}{4n+1}$.
	Thus, 
\begin{center}
		$\frac{x(S)}{\inff(S)}\geq \frac{x(Q)}{\inff(\{Q\})}$.
\end{center}
	
	(ii) $Q\notin S$: In this case, there exists an $\alpha>0$ such that
	\begin{equation*}
	\begin{aligned}
	\frac{x(S)}{\inff(S)}&= \frac{|S|[\frac{2}{3}(N+\frac{7}{8})-\alpha]}{(4n+1)(N+\frac{3}{4})}\\
	&\geq \frac{(2n+1)[\frac{2}{3}(N+\frac{7}{8})-\alpha]}{(4n+1)(N+\frac{3}{4})}\\
	&>\frac{[\frac{1}{3}(N+\frac{7}{8})+\alpha/2]}{N+\frac{1}{2}}\\
	&=\frac{x(Q)}{\inff(Q)}.
	\end{aligned}
	\end{equation*}

\end{proof}

\section{Appendix of Section \ref{sec:leastcorevalue}}
\begin{proof}[Proof of Theorem \ref{thm:alcvhardness}]
	We construct a reduction from MAX-CUT problem .
	The input of MAX-CUT is an undirected graph $g=(W,E)$, where $W$ is the node set and $E$ is the edge set.
	The question is to compute the value of the maximum size of all edge cuts in $g$.
	Given an instance of MAX-CUT problem $A_1$, we construct an instance of influence cooperative game $A_2$ as follows:
	
	The social graph is a one-way layer graph $N=(V \cup U_1 \cup U_2, D)$,
	where each node in $V$ corresponds to a node in $W$ one-to-one and each node in $U_1$ corresponds to an edge in $E$ one-to-one.
	There is a directed edge from $v\in V$ to $u\in U_1$ if and only if
	the node in $W$ corresponding to $v$ is one of the endpoints of the edge in $E$ corresponding to $u$.
	$U_2$ is a copy of $U_1$ and there is a directed edge from each node in $U_1$ to its copy node in $U_2$.
	The probabilities on all edges equal 1.
	Figure \ref{fig:leastcorevalue_hardnessproof_appendix} shows an example of the above construction.
	
	The influence cooperative game defined on $N$ is $A_2=(V, \inff(\cdot), \eta=0)$.
	Thus, ${\cal ALCV}(A_2)$ equals the optimal solution of the following linear programming:
	
	\begin{equation}\label{equ:leastcorevalue_hardnessprove_appendix}
	\OnlyInShort{\small}
	\begin{array}{ll}
	\mbox{min}& \ \varepsilon\\
	\mbox{s.t.}&\left\{
	\begin{array}{ll}
	x(V)=\inff(V)\\
	x(S)\ge \inff(S)-\varepsilon  & \quad \forall~S\subseteq V\\
	x(\{u\}) \ge 0 & \quad \forall~u\in V\\
	\end{array}\right.
	\end{array}
	\end{equation}
	
	\begin{figure}[t]
		\subfigure[$g$: An under-graph of MAX-CUT problem]
		{
			\label{maxcut_appendix}
			\includegraphics[width=6cm]{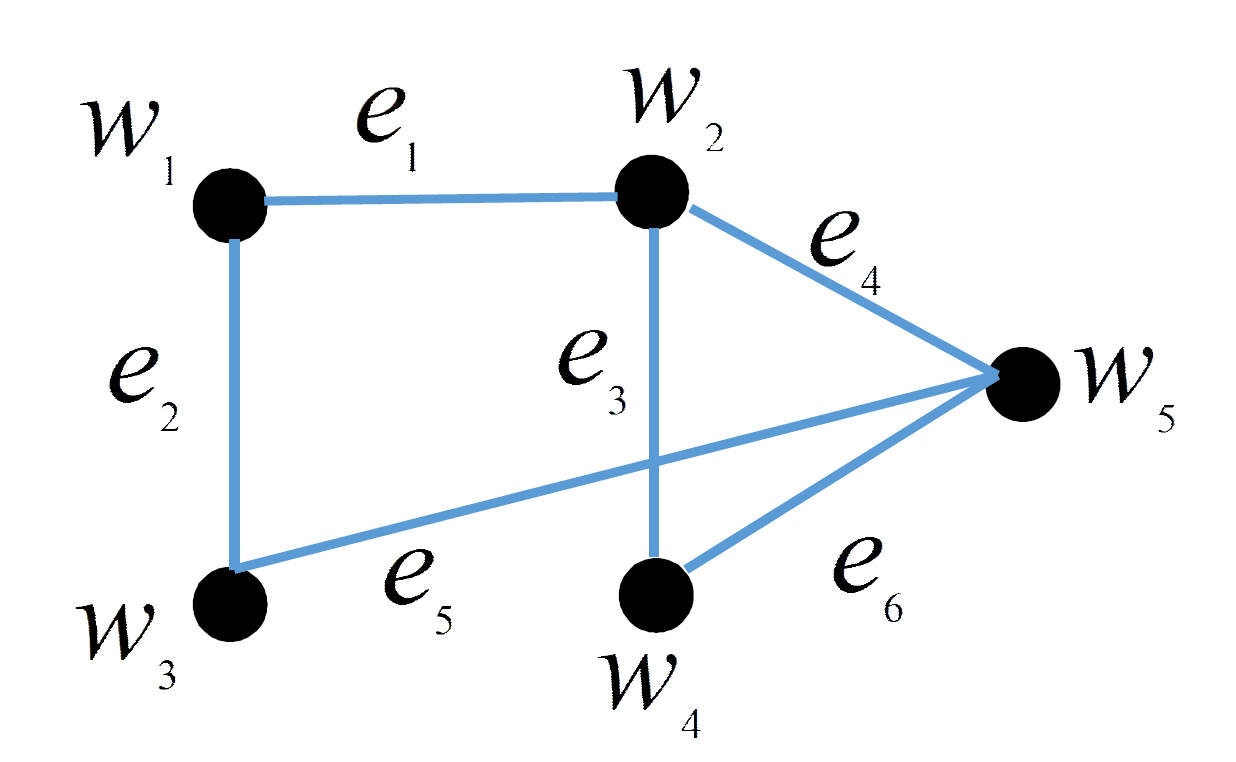}\\
		}
		\subfigure [$N$: The social graph transformed from $g$]
		{
			\includegraphics [width=6cm]{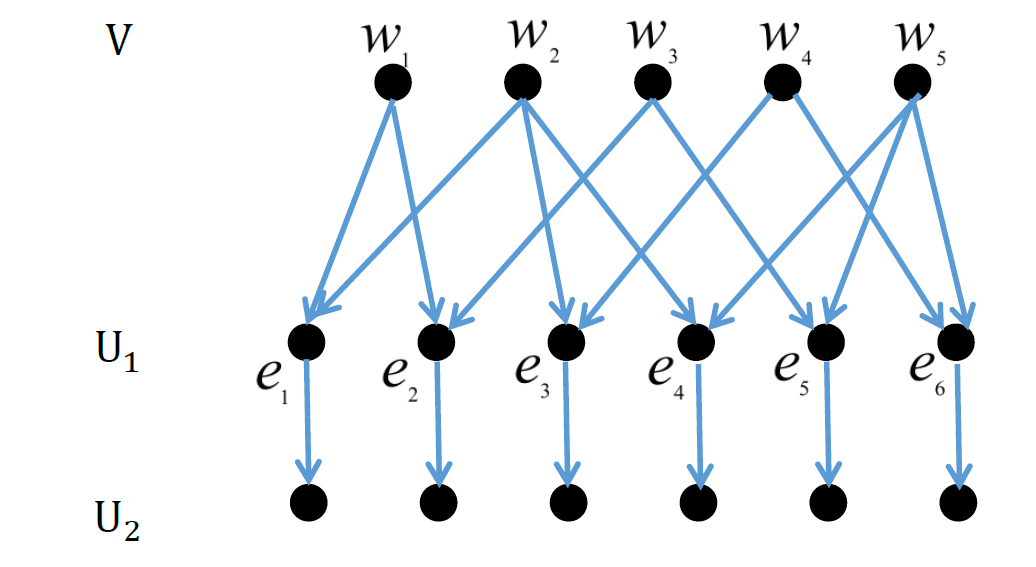} \\
			\label{fig:socialgraph_frommaxcut_appendix}
		}
		\caption {Constructing a social graph form a MAX-CUT instance}
		\label{fig:leastcorevalue_hardnessproof_appendix}
	\end{figure}
	
	In $g=(W,E)$, given any subset $S\subseteq W$, let ${\cal I}(S)=|\{(i,j)\in E: i\in S, j\in S\}|$
	and ${\cal C}(S)=|\{(i,j)\in E: i\in S, j\in W\setminus S\}|$.
	Clearly, ${\cal I}(S)$ is the number of edges induced by $S$ and ${\cal C}(S)$ is the size of the cut between $S$ and $W\setminus S$.
	To prove the hardness of computing ${\cal ALCV}(A_2)$, it is sufficient to prove  ${\cal ALCV}(A_2)=c^*$, where $c^*$ is the size of the maximum cut of $g$.
	Let $(x^*, \varepsilon^*)$ be the optimal solution of (\ref{equ:leastcorevalue_hardnessprove_appendix}).
	Therefore,
		\begin{equation*}
		\begin{cases}
		x^*(S)\geq \inff(S)-\varepsilon^*, \forall S\subseteq V\\
		x^*(V\setminus S)\geq \inff(V\setminus S)-\varepsilon^*, \forall S\subseteq V
		\end{cases}
		\end{equation*}
	Summing up these two inequalities, we have, $2\varepsilon^*\geq \inff(S)+\inff(V\setminus S)-x^*(V)
	=2{\cal I}(S)+2{\cal C}(S)+2{\cal I}(V\setminus S)+2{\cal C}(V\setminus S)-\inff(V)
	=2{\cal I}(V)+2{\cal C}(V\setminus S)-\inff(V)
	=2{\cal C}(V\setminus S)=2{\cal C}(S).$
	Then $\varepsilon^*\geq c^*$ since $\varepsilon^*\geq {\cal C}(S)$ for any $S\subseteq V$.
	
	Let $x=(x_1,x_2,\cdots,x_n)$ and $x_i$ be the degree of $i$ for any $i\in W$.
	It is obvious that $x(V)=2|E|=\inff(V)$.
	For any coalition $S\subseteq V$, $\inff(S)=2({\cal I}(S)+{\cal C}(S))
	=(2{\cal I}(S)+{\cal C}(S))+{\cal C}(S)
	=x(S)+{\cal C}(S)\leq x(S)+c^*.$
	That is to say $(x,c^*)$ is a feasible solution of (\ref{equ:leastcorevalue_hardnessprove_appendix}).
	Thus, $c^*\geq\varepsilon^*$.
	Combining with $c^*\leq\varepsilon^*$, we have $c^*=\varepsilon^*$.
	
	If there is a $\rho$-approximate ($\rho\geq 1$) algorithm ${\mathcal A}$ for computing ${\cal ALCV}(A_2)$ and ${\mathcal A}$ outputs $\varepsilon$.
	Therefore, $\varepsilon^*\leq\varepsilon\leq\rho\varepsilon^*$.
	Which means  $c^*\leq\varepsilon\leq\rho c^*$.
	Thus, $\varepsilon/\rho$ is a $(1/\rho)$-approximate value of $c^*$.
	If the unique games conjecture is true, \cite{khot2007optimal} proved that MAX-CUT cannot be approximated within
	$\alpha=\frac{2}{\pi}\min_{0\leq\theta\leq\pi}\frac{\theta}{1-\cos{\theta}}(\alpha\approx0.878)$.
	Thus, \alcv problem of influence cooperative game can not be approximated within $\frac{1}{\alpha}\approx 1.139$.
\end{proof}

\begin{proof}[Proof of Lemma \ref{lem:lpplprlps}]
	Suppose the optimal solutions of \lpprime, \lprelax and \lpstrengthen are $(x_p^*,\varepsilon_p^*)$, $(x_r^*,\varepsilon_r^*)$
	and $(x_s^*,\varepsilon_s^*)$, respectively.
	
	We first prove inequality (\ref{eq:lprandlpp}).
	It obvious that $(x_p^*,\varepsilon_p^*)$ is a feasible solution of \lprelax.
	Thus, $\varepsilon_r^*\leq \varepsilon_p^*$.
	In \lprelax, for any successful coalition $S$, we have
	$x_r^*(S)\geq\eta-\varepsilon_r^*\geq\sigma(S)-(\varepsilon_r^*+\sigma(V)-\eta)$.
	Therefore, $(x_r^*,\varepsilon_r^*+\sigma(V)-\eta)$ is a feasible solution of \lpprime.
	Thus, $\varepsilon_p^*\leq\varepsilon_r^*+\sigma(V)-\eta$ and then (\ref{eq:lprandlpp}) sets up.
	Now we prove (\ref{eq:lpsandlpp}). Similar to the proof of inequality (\ref{eq:lprandlpp}), $(x_s^*,\varepsilon_s^*)$ is a feasible solution of \lpprime.
	Thus, $\varepsilon_p^*\leq \varepsilon_s^*$.
	In \lpprime, for any successful coalition $S$, we have
	$x_p^*(S)\geq\sigma(S)-\varepsilon_p^*$.
	For any unsuccessful coalition $S'$, we have $\sigma(S')-x_p^*(S')<\eta$.
	Thus, $(x_p^*(S),\max(\varepsilon_p^*,\eta))$ is a feasible solution of \lpstrengthen.
	Thus, $\varepsilon_p^*\leq\varepsilon_s^*\leq\max(\varepsilon_p^*,\eta)$.
\end{proof}

\begin{proof}[Proof of Lemma \ref{lem:lpstrengthen}]
	\lpstrengthen captures cooperative games with submodular profit function.
	\cite{schulz2013approximating} proposed a framework to approximate the least-core value of this kind of cooperative games.
	
	They defined an optimization problem names as
	\emph{$x$-maximum dissatisfaction problem for cooperative game $(V, \sigma)$} ($x$-MD),
	where $V$ is the player set and $\sigma$ is the submodular profit function.
	The definition of $x$-MD is:
	Given any allocation $x$ such that $x(V)=\sigma(V)$,
	find a coalition $S^*$ whose dissatisfaction is maximum. i.e.
	$\max_{S\subseteq V}\{\sigma(S)-x(S)\}$.
	Under their framework, a $\rho$-approximation algorithm of $x$-MD implies a $1/\rho$-approximation algorithm
	of the least-core value of the cooperative game $(N,\sigma)$.
	Moreover, ones can find an allocation in this $\rho$-approximation least-core.
	
	Note that, given an allocation $x(V)=\sigma(V)$, finding the maximum value of $\sigma(S)-x(S)$ falls into
	the \emph{submodular function maximization problem} since $\sigma(S)-x(S)$ is submodular.
	In \cite{buchbinder2015tight}, the authors design a deterministic 1/3-approximate algorithm of submodular function maximization problem when the function value on $\emptyset$ and $V$ is nonnegative.
	That is to say, there exists an 1/3-approximation algorithm of \lpstrengthen since $\sigma(\emptyset)-x(\emptyset)=\sigma(V)-x(V)=0$.
\end{proof}

\begin{proof}[Proof of Lemma \ref{lem:lprelax}]
	The outline of the dynamic scheme is a generalization of the process in \cite{elkind2007computational}
	In \lprelax, let $\alpha=\eta-\varepsilon$, then \lprelax can be transformed to the following linear programming (\lprelaxnew):
	\begin{equation}\label{equ:lprelax'}
	\begin{array}{ll}
	\mbox{max}& \ \alpha\\
	\mbox{s.t.}&\left\{
	\begin{array}{ll}
	x(V)=\sigma(V)\\
	x(\{S\})\ge \alpha  & \quad \forall S\subseteq V, \sigma(S)\geq \eta\\
	x(\{u\}) \ge 0 & \quad \forall u\in V\\
	0\leq \alpha \leq \eta
	\end{array}\right.
	\end{array}
	\end{equation}
	We break \lprelaxnew into a family of linear feasibility programs ${\cal F}=\{LFP_1, LFP_2,\cdots, LFP_t\}$,
	where $t=\lceil\frac{\eta}{\delta}\rceil$.
	The $k$-th linear feasibility program $LFP_k$ is:
	\begin{equation*}
	\begin{cases}
	x(V)=\sigma(V)\\
	x(\{S\})\ge k\delta  & \quad \forall S\subseteq V, \sigma(S)\geq \eta\\
	x(\{u\}) \ge 0 & \quad \forall u\in V\\
	\end{cases}
	\end{equation*}
	Let $k^*=\max\{k:LFP_k$ has a feasible solution$\}$ and $\alpha^*$ be the optimal value of \lprelaxnew.
	Thus, $k^*\delta\leq \alpha^*<(k^*+1)\delta$.
	
	To solve these linear feasibility programs, we need a polynomial separation oracle since there are exponential constrains in each program.
	However, we do not know how to construct such an oracle.
	Following the idea in \cite{elkind2007computational}, we also construct a ``partial'' separation oracle ${\cal O}_k$ for each program $LFP_k$, $k=1,\cdots,t$.
	${\cal O}_k$ has the following property, given a candidate solution $x$ to $LFP_k$,
	the output of ${\cal O}_k$ falls into one of the following cases:
	\begin{enumerate}
		\item ${\cal O}_k$ successes, i.e. it can assert that $x$ is a feasible solution for $LFP_k$ or output a violate constraint in $LFP_k$.
		\item ${\cal O}_k$ fails, then it outputs a feasible solution for $LFP_{k-1}$.
	\end{enumerate}
	
	Now we introduce how to use this partial separation oracle.
	Suppose we run the ellipsoid algorithm for each $LFP_k$, $k=1,\cdots,t$, using ${\cal O}_k$ instead of a proper separation oracle.
	If ${\cal O}_k$ successes, we can obtain a feasible solution to $LFP_k$ or assert that $k^*=k-1$.
	If ${\cal O}_k$ fails, we can obtain a feasible solution to $LFP_{k-1}$.
	Thus, when we work to $k=k^*$, we can obtain a feasible solution of $LFP_k^*$ or $LFP_{k^*-1}$.
	Let $k'$ be the largest value of $k$ for which our procedure finds a feasible solution for $LFP_k$.
	It holds that $k^*-1\leq k'\leq k^*$.
	Now, it is not difficult to derive that $\varepsilon^*_r\leq \eta-k'\delta\leq\varepsilon^*_r+2\delta$.
	
	A crucial problem is how to design ${\cal O}_k$.
	Indeed the main idea of ${\cal O}_k$ is to apply dynamic programming such that
	we can decide whether there exists a coalition $S$ with $\sigma(S)\geq \eta$ but $x(S)<k\delta$ under the given $x$.
	That is to say, given $x(V)=\sigma(V)$, we want to compute $\max\{\sigma(S)|x(S)<k\delta\}$.
	To use dynamic programming, we need to discretized $x(S)$.
	For each $i\in V$, let $x'_i=\max\{j\delta'\leq x_i\}$, where $\delta'=\frac{\delta}{M}$.
	Now, for each $j\in \{1,\cdots, n\}$ and $l\in {\cal L}\{0,\cdots,(k-1)M-1\}$,
	let $z[j,l]=\max\{\sigma(S)|S\subseteq\{1,\cdots,j\}, x'(S)=l\delta'\}$.
	For each $l\in {\cal L}$, we initialize $z[j,l]$ as:
	\begin{equation*}
	z[1,l]=
	\begin{cases}
	\sigma(\{1\})&\text{$l=\lfloor\frac{x'_1}{\delta'}\rfloor$}\\
	-\infty & \text{otherwise}\\
	\end{cases}
	\end{equation*}
	The iteration rule is:
	\begin{center}
		$z[j,l]=\max\{z[j-1,l], z[j-1,\frac{l\delta'-x'_j}{\delta'}+\sigma(\{1,\cdots,j\})-\sigma(\{1,\cdots,j-1\})]\}$.
	\end{center}
	Let $U=\max\{z[n,i],l\in{\cal L}\}$, then we can find a violated constrains if $U\geq \eta$.
	Otherwise, if $U\leq \eta$, then for any $S\in V$ with $\sigma(S)\geq \eta$, we have $x'(S)\geq(k-1)n\delta'-\delta'$,
	thus, $x(S)\geq(k-1)\delta$.
	That is to say, $x$ is a feasible solution to $LFP_{k-1}$.
	The running time of ${\cal O}_k$ is polynomial time in $\log \sigma(V)$, $n$ and $1/\delta$.
\end{proof}

\section{Appendix of Section \ref{sec:ladv}}
\begin{proof}[Proof of Lemma \ref{lem:gsubmodular}]
	\begin{equation*}
	\begin{aligned}
	&tF(x^{(1)})+(1-t)F(x^{(2)})\\
	=&\frac{1}{2^n}\sum_{S\subseteq V}(\max (t(f(S)-x^{(1)}((S)), 0)+\\
	&\max ((1-t)(f(S)-x^{(2)}((S)), 0))\\
	\geq &\frac{1}{2^n}\sum_{S\subseteq V}\max (t(f(S)-x^{(1)}(S))+(1-t)(f(S)-x^{(2)}(S)), 0)\\
	=&F(tx^{(1)}+(1-t)x^{(2)}).	
	\end{aligned}
	\end{equation*}
\end{proof}

\newpage
\bibliographystyle{abbrv}
\bibliography{inf-game}

\end{document}